\DeclareMathOperator*{\argmax}{arg\,max}
\newcommand{\E}{\operatorname{\mathbb E}}
\newtheorem{corollary}{Corollary}
\newtheorem{theorem}{Theorem}
\newtheorem{proposition}{Proposition}
\newtheorem{definition}{Definition}
\newtheorem{assumption}{Assumption}
\newtheorem{altassumption}{Assumption}[assumption]
\newenvironment{assumption+}[1]
  {%
   \begin{altassumption}}
  {\end{altassumption}}
\newenvironment{assumption*}
 {\ifnum \value{subassumption}=0
 \stepcounter{assumption}\fi\subassumption}
 {\endsubassumption}
\newcommand{\bck}[1]{\left\langle{#1}\right\rangle}
\newcommand\scal[2]{\bck{{#1},{#2}}}
\crefname{assumption}{assumption}{Assumptions}
\crefname{appendix}{appendix}{Appendices}
\newcommand{\BE}{\text{Q}^{*}\text{RE}}
\newcommand{\QRE}{\text{Q}^{\pi}\text{RE}}
\title{Bounded Rationality Equilibrium Learning in Mean Field Games

}
\author{%
  Yannick Eich, Christian Fabian, Kai Cui, Heinz Koeppl
}
\begin{document}

\maketitle

\begin{abstract}
Mean field games (MFGs) tractably model behavior in large agent populations. The literature on learning MFG equilibria typically focuses on finding Nash equilibria (NE), which assume perfectly rational agents and are hence implausible in many realistic situations.
To overcome these limitations, we incorporate bounded rationality into MFGs by leveraging the well-known concept of quantal response equilibria (QRE). Two novel types of MFG QRE enable the modeling of large agent populations where individuals only noisily estimate the true objective. We also introduce a second source of bounded rationality to MFGs by restricting the agents' planning horizon. The resulting novel receding horizon (RH) MFGs are combined with QRE and existing approaches to model different aspects of bounded rationality in MFGs.
We formally define MFG QRE and RH MFGs and compare them to existing equilibrium concepts such as entropy-regularized NE. Subsequently, we design generalized fixed-point iteration and fictitious play algorithms to learn QRE and RH equilibria. After a theoretical analysis, we give different examples to evaluate the capabilities of our learning algorithms and outline practical differences between the equilibrium concepts.
\end{abstract}


\section{Introduction}
Learning equilibria in multi-agent games is of great practical interest but hard to scale to many agents \citep{daskalakis2009complexity, deng2023complexity}. Mean field games (MFGs) allow scaling to arbitrarily many exchangeable agents at fixed complexity. MFGs are of recent interest as a tractable method to learn approximate equilibria of rational, selfish agents \citep{guo2019learning, cui2021approximately, xie2021learning, lauriere2022learning, anahtarci2020q}. Thus, MFGs are applied in various settings ranging from finance to engineering \citep{djehiche2017mean, achdou2020mean, carmona2020applications}.

A common concept in multi-agent learning is the Nash equilibrium (NE), where each player's strategy is optimal given others', leading to no incentive for agents to change strategies.
The optimality notion inherent in NE assumes full rationality of the individual agents.

However, in many real-world situations individuals may not behave perfectly rational due to limited information processing capabilities, psychological factors, social considerations or other factors. Deviations from perfect rationality are described by the fundamental concept of bounded rationality \citep{simon1955behavioral, simon1979rational, kahneman1982psychology, selten1990bounded, gigerenzer2002bounded, kahneman2013perspective}. Bounded rationality implies that for many real-world scenarios NE are insufficient due to their rigorous perfect rationality assumption. Instead of NE, we require a more realistic equilibrium concept accounting for partially irrational agents. 

A popular game-theoretic approach to modeling bounded rationality of agents are quantal response equilibria (QRE) \citep{MCKELVEY19956,mckelvey1998quantal} which are used, e.g., in economics \citep{breitmoser2010understanding}, robust RL \citep{reddi2023robust} and for efficient NE approximation \citep{gemp2023approximating}. Intuitively, in a QRE agents perceive rewards perturbed by noise and act optimally with respect to these perturbed rewards.
In our work, we extend QRE to the domain of MFGs to model the behavior of a large number of agents who deviate from perfect rationality.

Meanwhile, on the control-theoretic side, a common approximately optimal control method is model predictive control (MPC) \citep{kouvaritakis2016model}, also known as receding horizon control. To further enhance modeling of bounded rationality in MFGs, we incorporate a receding horizon method, where agents make decisions based on a limited future time horizon, reflecting more realistic decision-making processes. In contrast to MPC-based variants of MFGs such as \cite{inoue2021model}, we analyze the resulting novel receding horizon equilibria and instead focus on \textit{learning} such equilibria, in a \textit{discrete-time} setting.

Beyond realism, introducing bounded rationality yields possible tractability advantages. NE computation for MFGs can be hard, motivating the search for alternative equilibrium notions. We show that under certain assumptions, QRE can be computed using a fixed-point iteration (FPI). Moreover, QRE solutions can be seen as NE approximations with arbitrarily accurate design \citep{eibelshauser2019markov}. Recently, different equilibria have been introduced as NE approximations in MFGs \citep{cui2021approximately}. We compare QRE with these equilibria theoretically and empirically and provide a new algorithm to compute QRE which extends to these equilibria. 
For receding horizon equilibria, we develop novel algorithms effective in theory and practice.

Our main contributions are:
\begin{itemize}
    \item We formulate QRE for MFGs to incorporate bounded rationality for a more realistic MFG framework;
    \item We integrate a receding horizon method tailored to the limited lookahead capacity of realistic agents;
    \item We give theoretical and empirical results to put MFG QRE in context to existing equilibrium concepts;
    \item We generalize the known fictitious play (FP) and FPI algorithms for NE to learn QRE and other equilibria;
    \item We provide empirical examples to demonstrate the capabilities of our learning algorithms.
\end{itemize}

\section{Equilibria in MFGs} \label{sec:discrete}
In this section, we first give a description of finite games in discrete time and their corresponding MFGs. We then define common and new equilibrium notions as solution concepts and desired results of multi-agent equilibrium learning algorithms, which are introduced thereafter. For space reasons, some proofs are in the appendix.

\textit{Notation: Denote by $\mathcal P(\mathcal X)$ the space of probability measures on finite set $\mathcal X$, equipped with the $L_1$ norm $\lVert \cdot \rVert$ unless noted otherwise. Equip products of metric spaces with the $\sup$ metric. Further, let $[N] \coloneqq \{ 1, \ldots, N \}$ for $N \in \mathbb N$.} 

\subsection{Finite Agent Games}
For the finite $N$-agent game of practical interest, consider agents $i \in [N]$ endowed with random states $x^i_t$ and actions $u^i_t$ at all times $t \in \mathcal T \coloneqq \{ 0, 1, \ldots, T-1\}$ up to time horizon $T \in \mathbb N$. Let $\mathcal X$ and $\mathcal U$ be the finite state and action spaces for agents, respectively. The empirical mean field (MF) $\mu_t^N \coloneqq \frac{1}{N} \sum_{i=1}^N \mathbf 1_{x_t^i}$ can be understood as a histogram of agent states. Each agent $i$ implements stochastic Markovian policies $\pi^i \in \Pi \equiv \mathcal P(\mathcal U)^{\mathcal X \times \mathcal T}$ depending on the current time and local agent state. For some initial state distribution $\mu_0$ with $x^i_0 \sim \mu_0$, for all agents $i$ define state-action dynamics
\begin{align} \label{eq:finite}
    u^i_t &\sim \pi^i_t(u^i_t \mid x^i_t), \quad
    x^i_{t+1} \sim p_t(x^i_{t+1} \mid x^i_t, u^i_t, \mu_t^N)
\end{align}
given some transition kernels $p_t \colon \mathcal X \times \mathcal U \times \mathcal P(\mathcal X) \to \mathcal P(\mathcal X)$.

Competitive agents aim to optimize their own objective while predicting other agents' behavior. The objective notion depends on the chosen equilibrium notion
, but typically uses functions $r_t \colon \mathcal X \times \mathcal U \times \mathcal P(\mathcal X) \to \mathbb R$ resulting in rewards $r_t(x^i_t, u^i_t, \mu_t^N)$ at all times $t \in \mathcal T$ to maximize.

\subsection{Mean Field Games}
MFGs are the limit of finite $N$-agent games with $N \to \infty$ and approximate many-agent finite games well. By a law of large numbers, the empirical MF is essentially replaced by its deterministic limiting MF. 
The idea of MFGs is to find approximate (symmetric) equilibria, which are otherwise hard to find in finite games with many agents \citep{deng2023complexity}.
MFGs assume all agents to symmetrically play the same policy $\pi^* \in \Pi$ -- the equilibrium solution.  
Whenever agent $i$ deviates from $\pi^*$ and instead uses some policy $\pi$, this corresponds to the policy tuple $(\pi, \underline{\pi}^{-i})$, where $\underline{\pi}^{-i} = (\pi^*, \ldots, \pi^*)$ denotes all but the $i$-th policy in the finite game. 
Hence, in the limit as $N \to \infty$ we have
\begin{align} \label{eq:mf}
    u_t &\sim \pi_t(u_t \mid x_t), \quad
    x_{t+1} \sim p_t(x_{t+1} \mid x_t, u_t, \mu_t)
\end{align}
for the representative deviating agent. Here, the empirical MFs $\mu^N_t$ are replaced by the deterministic limiting MF $\mu \coloneqq (\mu_t)_{t \in \mathcal T} \in \mathcal M \subseteq \mathcal P(\mathcal X)^{\mathcal T}$, given by the probability law $\mu$ of any other agent playing the assumed equilibrium policy $\pi^*$. Further, $\mathcal M$ is the space of all obtainable MFs. We write $\mu = \Gamma_{\mathcal M}(\pi^*)$, defined by fixed initial $\mu_0$ and the recursion
\begin{align*}
    \mu_{t+1}(x') = \sum_{x \in \mathcal X} \mu_t(x) \sum_{u \in \mathcal U} \pi^*_t(u \mid x) p_t(x' \mid x, u, \mu_t).
\end{align*}

\subsection{Notions of Non-Cooperative Equilibria} \label{sec:notions}
As discussed, there are many equilibrium notions.
Here, we focus on non-cooperative equilibria where agents optimize over independent policies to maximize their own objective.

\subsubsection{Nash equilibria.}
First, we have the standard objective of any agent $i$ given as
\begin{align*}
    J^i(\hat \pi, \underline{\pi}^{-i}) = \E \left[ \sum_{t \in \mathcal T} r(x^i_t, u^i_t, \mu^N_t) \right],
\end{align*}
which, for $\underline{\pi}^{-i} = \times_{j \neq i} \pi$, in the limiting MFG yields
\begin{align} \label{eq:J}
    J(\hat \pi, \pi) &= \E \left[ \sum_{t \in \mathcal T} r(x_t, u_t, \mu_t) \right], \quad \mu = \Gamma_{\mathcal M}(\pi)
\end{align}
where only agent $i$ deviates from policy $\pi$ to $\hat \pi$. 
If agents are rational and anticipate other agents' decisions, all agents should use policies such that none can improve their objective by deviating from the equilibrium. This leads to the well-known Nash equilibrium.

\begin{definition}[Exploitability]
     In the finite game, $\mathcal E^N(\underline{\pi}) \coloneqq \max_{i \in [N]} \sup_{\hat \pi \in \Pi} \{ J^i(\hat \pi, \underline{\pi}^{-i}) - J^i(\underline{\pi}) \}$ is the exploitability of a policy tuple $\underline{\pi} \in \Pi^N$. The limiting exploitability $\mathcal E(\pi)$ of a policy $\pi \in \Pi$ is $\mathcal E(\pi) \coloneqq \max_{\hat \pi \in \Pi} J(\hat \pi, \pi) - J(\pi, \pi)$.
\end{definition}

\begin{definition}[Approximate NE]
    For any $\epsilon > 0$, an $\epsilon$-approximate NE ($\epsilon$-NE) is a policy tuple $\underline{\pi} \in \Pi^N$ with $\mathcal E^N(\underline{\pi}) \leq \epsilon$. An exact NE is an $\epsilon$-NE with $\epsilon = 0$.
\end{definition}

The resulting limiting NE thus becomes a policy that performs optimally against itself, i.e. when all other agents also use the same policy \citep{saldi2018markov}.

\begin{definition}[Mean Field NE]\label{def:NE}
    A Mean Field NE (MFNE) is a policy $\pi^* \in \Pi$ such that $\pi^* \in \argmax_{\pi \in \Pi} J(\pi, \pi^*)$.
\end{definition}

MFNE are well-known to constitute approximate NE in large finite $N$-agent games, in the sense of a negligible exploitability, rigorously motivating MFGs and MFNE under mild continuity assumptions of the game.

\begin{assumption} \label{ass:cont}
    The transition kernel $P$ and reward function $r$ are continuous in their MF argument.
\end{assumption}

\begin{proposition}[{\citet[Thm. 3.3, 4.1]{saldi2018markov}}]
    Under Assm.~\ref{ass:cont}, a MFNE $\pi^*$ exists, and yields a finite game $\epsilon$-NE $\underline{\pi}^* = (\pi^*, \ldots, \pi^*)$, with $\epsilon \to 0$ as $N \to \infty$. 
\end{proposition}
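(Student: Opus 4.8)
The plan is to split the statement into two claims that are proved by separate but standard mean-field arguments: existence of an MFNE via a best-response fixed point, and the $\epsilon$-Nash property of the symmetric profile $\underline{\pi}^* = (\pi^*,\dots,\pi^*)$ via a propagation-of-chaos estimate, both leaning on \cref{ass:cont} together with the finiteness of $\mathcal X$, $\mathcal U$ and the horizon $T$. For existence, I would set up a correspondence on the compact convex policy space $\Pi = \mathcal P(\mathcal U)^{\mathcal X \times \mathcal T}$. First, $\Gamma_{\mathcal M}$ is continuous: its defining recursion is multilinear in the entries of $\pi$ and $\mu_t$ and continuous in the kernel's MF argument by \cref{ass:cont}, so continuity follows by induction over $t \in \mathcal T$. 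Fixing a flow $\mu \in \mathcal M$, the deviating agent faces a standard finite-horizon MDP with reward $r(\cdot,\cdot,\mu_t)$ and kernel $p_t(\cdot \mid \cdot,\cdot,\mu_t)$, whose optimal value and $Q$-functions are obtained by backward induction and, again by \cref{ass:cont}, depend continuously on $\mu$. Defining the best-response correspondence $\mathrm{BR}(\mu)$ as the set of policies that are greedy with respect to the optimal $Q$-function at every state-time pair, each $(x,t)$ contributes the face of the simplex spanned by the maximizing actions — nonempty, convex and compact — and the product over $(x,t)$ inherits these properties; every such policy attains the optimal MDP value and hence lies in $\argmax_{\hat\pi} J(\hat\pi,\cdot)$. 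Berge's maximum theorem yields upper hemicontinuity of $\mathrm{BR}$, so the composition $\pi \mapsto \mathrm{BR}(\Gamma_{\mathcal M}(\pi))$ maps $\Pi$ to itself with nonempty convex compact values, and Kakutani's fixed-point theorem furnishes $\pi^* \in \mathrm{BR}(\Gamma_{\mathcal M}(\pi^*))$, which is exactly an MFNE in the sense of \cref{def:NE}.

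For the approximation claim, I would bound the finite-game exploitability of $\underline{\pi}^*$ by comparison with the limiting objective. Write $\mu^* = \Gamma_{\mathcal M}(\pi^*)$. By symmetry it suffices to bound the deviation gain of agent $1$. The two ingredients are: (i) a propagation-of-chaos estimate $\E[\lVert \mu^N_t - \mu^*_t \rVert] \to 0$, uniform over all deviations $\hat\pi$ of the single agent $1$, since that agent perturbs the shared empirical MF by only $O(1/N)$ while the remaining $N-1$ agents play $\pi^*$; and (ii) a transfer of this MF closeness to payoff closeness, namely $\sup_{\hat\pi} \lvert J^1(\hat\pi, \underline{\pi}^{*,-1}) - J(\hat\pi, \pi^*) \rvert \le \delta_N$ and $\lvert J^1(\underline{\pi}^*) - J(\pi^*,\pi^*) \rvert \le \delta_N'$ with $\delta_N, \delta_N' \to 0$, obtained by propagating the MF error through the $T$-step dynamics using the continuity moduli of $r$ and $p$ from \cref{ass:cont} (uniform, since $\mathcal P(\mathcal X)$ is compact). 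Since $\pi^*$ is an MFNE, $J(\hat\pi, \pi^*) \le J(\pi^*, \pi^*)$ for every $\hat\pi$, so chaining (i)--(ii) gives $J^1(\hat\pi, \underline{\pi}^{*,-1}) - J^1(\underline{\pi}^*) \le \delta_N + \delta_N'$ for all $\hat\pi$; taking the supremum over $\hat\pi$ and the maximum over agents yields $\mathcal E^N(\underline{\pi}^*) \le \epsilon$ with $\epsilon = \delta_N + \delta_N' \to 0$.

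The main obstacle is steps (i)--(ii): making the mean-field concentration uniform over the deviating agent's policy and propagating the error through time without blow-up. Concretely, I expect an induction on $t$ controlling $\E[\lVert \mu^N_t - \mu^*_t \rVert]$ whose step must absorb the $O(1/\sqrt N)$ sampling fluctuation of the $N-1$ conditionally independent transitions, the $O(1/N)$ contribution of the deviating agent's own state to the shared MF argument, and the continuity distortion of $p_t$ as its MF argument moves from $\mu^N_{t-1}$ to $\mu^*_{t-1}$. The finiteness of $\mathcal X$, $\mathcal U$ and $T$ keeps all constants finite and the moduli uniform, so the recursion closes; this is precisely the content of \citet[Thm.~3.3, 4.1]{saldi2018markov}, whose argument I would follow.
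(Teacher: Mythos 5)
Your proposal is correct and follows essentially the same route as the paper, which does not prove this proposition itself but imports it from \citet[Thm.~3.3, 4.1]{saldi2018markov}: their Theorem 3.3 establishes existence via exactly the best-response fixed-point argument you sketch (specialized here to finite spaces, where Kakutani suffices), and their Theorem 4.1 establishes the $\epsilon$-Nash property via the same propagation-of-chaos and payoff-transfer estimates. Your finite-space specialization of both steps is sound, so nothing is missing.
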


The maximization of Eq.~\eqref{eq:J} for a fixed MF $\mu$ involves the optimal state-action value function given by the Bellman recursion $Q^{*} \equiv Q^{\mu, *} = \Gamma_{Q^*}(\mu)$ (suppressing $\mu$) defined as
\begin{multline}\label{eq:Qstar}
    Q^{*}_t(x,u) = r(x,u,\mu_t) + \sum_{x' \in \mathcal{X}} p_t(x'\mid x,u,\mu_t) \\
    \max_{u' \in \mathcal U} Q^{*}_{t+1}(x', u'),
\end{multline}
with $Q_{T-1}^{*}(x,u) = r (x, u, \mu_{T-1})$. The optimal policy is then obtained by maximizing $Q^*$ with respect to action $u$, for which we write $\pi^*=\Gamma_{\Pi}^{*}(Q^*)$. We can then rewrite Def.~\ref{def:NE} as the fixed-point equation $\pi^*= \Gamma_{\Pi}^{*}(\Gamma_{Q^*}(\Gamma_{\mathcal M}(\pi^*)))$.

\subsubsection{Regularized equilibria.}
A common alternative to MFNE is to use regularized control \citep{geist2019theory, belousov2019entropic} (typically entropy regularization). 
The idea is to replace the objective in Eq.~\eqref{eq:J} by an entropy-regularized one, which maximizes the entropy $\mathcal H(\hat \pi_t(\cdot \mid x_t)) \coloneqq - \sum_{u \in \mathcal U} \hat \pi_t(u \mid x_t) \log \hat \pi_t(u \mid x_t)$ of policies in encountered states $x_t$,
\begin{align*}
    J^{\mathrm{RE}}_\alpha(\hat \pi, \pi) \coloneqq \E \left[ \sum_{t \in \mathcal T} r(x_t, u_t, \mu_t) + \alpha \mathcal H(\hat \pi_t(\cdot \mid x_t)) \right]
\end{align*}
with $\mu = \Gamma_{\mathcal M}(\pi)$, and temperature $\alpha > 0$. Accordingly, we define regularized equilibria (RE) similar to MFNE.
\begin{definition}[RE] \label{def:RE}
    A RE is a policy $\pi^* \in \Pi$ with $\pi^* = \argmax_{\pi \in \Pi} J^{\mathrm{RE}}_\alpha(\pi, \pi^*)$.
\end{definition}

Note that for fixed $\mu \in \mathcal{M}$, it is known (e.g., \citet{cui2021approximately}) that the optimal policy $\hat \pi^{\mu, \alpha}$ is
\begin{align}
    \hat{\pi}_{t}^{\mu, \alpha}(u\mid x) = \frac{\exp \left(\frac{1}{\alpha} \tilde{Q}_{t}^{\mu, \alpha}(x,u)\right)}{\sum_{u' \in \mathcal{U}} \exp \left(\frac{1}{\alpha} \tilde{Q}_{t}^{\mu, \alpha}(x,u')\right)}
\end{align}
and write $\hat \pi^{\mu,\alpha} = \Gamma^\alpha_\Pi(\tilde{Q}^{\mu, \alpha})$ for such softmax policies, given the soft state-action value function $\tilde{Q}_{t}^{\mu, \alpha}(x,u)$. We also write $\tilde{Q}_{t}^{\mu, \alpha} = \Gamma^\alpha_{\tilde Q}(\mu)$ for the soft state-action value function given $\mu$, defined through the smooth-maximum Bellman recursion
\begin{multline} \label{eq:smoothQ}
    \tilde{Q}_{t}^{\mu, \alpha}(x,u) = r (x, u, \mu_t) + \sum_{x' \in \mathcal{X}} p_t (x' \mid x, u, \mu_t) \\
    \cdot\alpha \log \left( \sum_{u' \in \mathcal U} \exp \left(\frac{1}{\alpha} \tilde{Q}_{t+1}^{\mu, \alpha}(x',u')\right) \right)
\end{multline}
with $\tilde{Q}_{T-1}^{\mu, \alpha}(x,u) = r (x, u, \mu_{T-1})$. For readability, we omit super- and subscript $\mu, \alpha$ where it is clear from context. Now, we can rewrite Def.~\ref{def:RE} as $\pi^* = \Gamma^\alpha_\Pi(\Gamma^\alpha_{\tilde Q}(\Gamma_{\mathcal M}(\pi^*)))$.

There are various reasons for regularization, such as improved tractability in learning MFG equilibria \citep{anahtarci2020q, cui2021approximately, lauriere2022learning, li2024transition}, exploratory properties in RL settings \citep{guo2022entropy}, and robustness against model uncertainty \citep{eysenbach2021maximum}. 

\subsubsection{Quantal response equilibria.}
To incorporate bounded rationality, we assume agents act suboptimally and do not exactly optimize an objective.
Building on economics literature \citep{breitmoser2010understanding, eibelshauser2019markov}, we introduce Markov QRE
as an MFG equilibrium notion where agents only noisily estimate the state-action value function. Depending on the noise, agents act independently according to their own estimates, to the best of their knowledge.
In the mentioned literature, the QRE definition is based on the state-action values $Q^{\pi} \equiv Q^{\mu, \pi} = \Gamma_{Q^{\pi}}(\mu, \pi)$ of a policy $\pi$ under current MF $\mu$, given by the recursion
\begin{multline*}
    Q^{\pi}_t(x,u) = r(x,u,\mu_t) + \sum_{x' \in \mathcal{X}} p_t(x'\mid x,u,\mu_t) \\
    \sum_{u' \in \mathcal U} \pi_t(u' \mid x') Q^{\pi}_{t+1}(x', u'),
\end{multline*}
with $Q_{T-1}^{\pi}(x,u) = r (x, u, \mu_{T-1})$. We extend this to the optimal state-action value function $Q^{*}$ defined in Eq.~\eqref{eq:Qstar} and denote the resulting equilibria as  $\text{Q}^{\pi}\text{RE}$ and $\text{Q}^{*}\text{RE}$, respectively. 
For $\text{Q}^{\pi}\text{RE}$, the noisy state-action value function is
\begin{equation*}
    \hat Q^{\pi}_t(x,u) = Q^{\pi}_t(x,u) + \epsilon_t(x,u),
\end{equation*}
given policy $\pi$, where $\epsilon_t$ is sampled from a distribution $p(\epsilon_t)$. 
We then describe the set of realizations of $\epsilon_t$, where agents in state $x$ perceive action $u$ as optimal, called response set, as
\begin{align*}
    \mathcal{R}_{t,x,u} = \{\epsilon_t
    \colon \hat Q^{\pi}_t(x,u) > \hat Q^{\pi}_t(x,u') \quad \forall u'\neq u \}.
\end{align*}
The probability that agents choose action $u$ corresponds to the probability mass of the respective response set. Next, we define the corresponding equilibrium.

\begin{definition}[$\text{Q}^{\pi}\text{RE}$]
    A (Markov) $\text{Q}^{\pi}\text{RE}$ is a policy $\pi^*$ s.t.
    \begin{equation*}
        \pi_t^{*}(u\mid x) = \int_{\mathcal{R}_{t,x,u}}p(\epsilon_t) \mathrm d\epsilon_t
    \end{equation*}
    for all $t \in \mathcal T, x \in \mathcal X, u \in \mathcal U$, and $\mu = \Gamma_{\mathcal M}(\pi^*)$.
\end{definition}
For the $\text{Q}^{*}\text{RE}$ we define the noisy estimates of $Q^*$ as
\begin{equation*}
    \hat Q^{*}_t(x,u) = Q^{*}_t(x,u) + \epsilon_t(x,u).
\end{equation*}
We equivalently define the resulting response set $\mathcal{R}^*_{t,x,u}$ and the corresponding equilibrium.
\begin{definition}[$\text{Q}^*$RE]
    A (Markov) $\text{Q}^*$RE is a policy $\pi^*$ s.t.
    \begin{equation*}
        \pi_t^{*}(u\mid x) = \int_{\mathcal{R}^*_{t,x,u}}p(\epsilon_t) \mathrm d\epsilon_t
    \end{equation*}
    for all $t \in \mathcal T, x \in \mathcal X, u \in \mathcal U$, and $\mu = \Gamma_{\mathcal M}(\pi^*)$.
\end{definition}
If the noise follows a Gumbel distribution with parameter $\lambda$, $\text{Q}^{\pi}\text{RE}$ and $\text{Q}^{*}\text{RE}$ policies can be computed analytically for fixed $\mu$, as softmax policies $\pi^{*} = \Gamma^{1/\lambda}_\Pi({Q}^{\pi})$ and $\pi^{*} = \Gamma^{1/\lambda}_\Pi({Q}^{*})$, respectively, where $\alpha=1/\lambda$.
The special case of $\text{Q}^{\pi}\text{RE}$ leads to the MFG analogue of the so-called logit equilibrium, considered in economics \citep{breitmoser2010understanding, eibelshauser2019markov}, while the special case of $\text{Q}^{*}\text{RE}$ leads to the Boltzmann equilibrium \citep{guo2019learning, cui2021approximately}.

\begin{definition}[Logit $\text{Q}^{\pi}\text{RE}$]
    A Logit $\text{Q}^{\pi}\text{RE}$ (L$\text{Q}^{\pi}\text{RE}$) is a policy $\pi^* \in \Pi$ such that $\pi^* = \Gamma^{1/\lambda}_\Pi(\Gamma_{Q^{\pi}}(\Gamma_{\mathcal M}(\pi^*),\pi^*))$.
\end{definition}
\begin{definition}[Boltzmann equilibrium]
    A BE is a policy $\pi^* \in \Pi$ such that $\pi^* = \Gamma^{1/\lambda}_\Pi(\Gamma_{Q^*}(\Gamma_{\mathcal M}(\pi^*)))$.
\end{definition}

In the following, $\text{Q}^{\pi}\text{RE}$ and $\text{Q}^{*}\text{RE}$ usually refer to their special cases of L$\text{Q}^{\pi}\text{RE}$ and BE. We show that such equilibria are guaranteed to exist. For BE see \citep{cui2021approximately}.

\begin{proposition} \label{prop:disc-qre-exist}
    For any $\lambda > 0$, a $\text{Q}^{\pi}\text{RE}$ exists under Assm.~\ref{ass:cont}.
\end{proposition}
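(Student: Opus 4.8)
The plan is to characterize a L$\text{Q}^{\pi}\text{RE}$ as a fixed point of the self-map $\Phi \colon \Pi \to \Pi$ defined by $\Phi(\pi) \coloneqq \Gamma^{1/\lambda}_\Pi(\Gamma_{Q^{\pi}}(\Gamma_{\mathcal M}(\pi),\pi))$, and then apply Brouwer's fixed-point theorem. First I would verify the domain: since $\mathcal X$, $\mathcal U$, and $\mathcal T$ are finite, $\Pi = \mathcal P(\mathcal U)^{\mathcal X \times \mathcal T}$ is a finite product of probability simplices, hence a nonempty, convex, and compact subset of a finite-dimensional Euclidean space. Brouwer's theorem then applies as soon as $\Phi$ is shown to be continuous, so the whole argument reduces to a continuity check of the three constituent maps.

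For continuity I would proceed in three steps. For $\Gamma_{\mathcal M}$, note that the MF recursion writes $\mu_{t+1}(x')$ as a finite sum of products of $\mu_t(x)$, $\pi_t(u \mid x)$, and $p_t(x' \mid x, u, \mu_t)$; under Assm.~\ref{ass:cont} the kernel $p_t$ is continuous in its MF argument, and the recursion is otherwise polynomial, hence jointly continuous in $(\mu_t, \pi_t)$. Starting from the fixed $\mu_0$, a forward induction over $t$ shows $\mu = \Gamma_{\mathcal M}(\pi)$ is continuous in $\pi$. For $\Gamma_{Q^{\pi}}$, the backward recursion for $Q^\pi$ is built from $r$ and $p_t$ (continuous in $\mu$ by Assm.~\ref{ass:cont}) and is linear in both the continuation values $Q^\pi_{t+1}$ and in $\pi$; a backward induction from the terminal condition $Q^{\pi}_{T-1} = r(\cdot,\cdot,\mu_{T-1})$ shows $Q^{\pi} = \Gamma_{Q^{\pi}}(\mu,\pi)$ is jointly continuous in $(\mu, \pi)$. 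Finally, for fixed $\lambda > 0$ the softmax map $\Gamma^{1/\lambda}_\Pi$, sending $Q$ to the policy proportional to $\exp(\lambda Q_t(x,\cdot))$, is smooth and therefore continuous in $Q$.

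Composing these maps, $\pi \mapsto (\Gamma_{\mathcal M}(\pi), \pi) \mapsto \Gamma_{Q^{\pi}}(\Gamma_{\mathcal M}(\pi), \pi) \mapsto \Gamma^{1/\lambda}_\Pi(\cdots)$ is a continuous self-map of the compact convex set $\Pi$. Brouwer's theorem yields a fixed point $\pi^* = \Phi(\pi^*)$, which by Def.~(L$\text{Q}^{\pi}\text{RE}$) is precisely a L$\text{Q}^{\pi}\text{RE}$; since $\text{Q}^{\pi}\text{RE}$ here refers to this special case, existence follows.

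The main obstacle is the continuity of the two value/flow maps $\Gamma_{\mathcal M}$ and $\Gamma_{Q^{\pi}}$ through their respective forward and backward recursions, which is exactly where Assm.~\ref{ass:cont} enters; the compactness and convexity of $\Pi$ and the smoothness of the softmax are immediate. I would emphasize that no contraction property of $\Phi$ is available in general, so Banach's theorem does not apply and Brouwer is the right tool; moreover, because the Gumbel/softmax response $\Gamma^{1/\lambda}_\Pi$ is single-valued and continuous, the set-valued Kakutani generalization is not needed.
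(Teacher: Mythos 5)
Your proposal is correct and follows essentially the same route as the paper: both arguments view the L$\text{Q}^{\pi}\text{RE}$ as a fixed point of the composed map $\pi \mapsto \Gamma^{1/\lambda}_\Pi(\Gamma_{Q^{\pi}}(\Gamma_{\mathcal M}(\pi),\pi))$, establish continuity of the constituent maps from Assumption~\ref{ass:cont} and finiteness of $\mathcal T, \mathcal X, \mathcal U$, and conclude via Brouwer's fixed-point theorem. Your write-up merely spells out the details (compactness and convexity of $\Pi$, the forward/backward inductions for $\Gamma_{\mathcal M}$ and $\Gamma_{Q^\pi}$) that the paper's terser proof leaves implicit.
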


\subsubsection{Other MFG equilibrium concepts.}

The literature contains many equilibrium concepts for MFGs. One example are (coarse) correlated equilibria \citep{campi2022correlated, muller2022learning, muller2021learning} where agents obtain advice from a mediator to align their actions. The resulting correlation mechanism enables efficient equilibria calculation under moderate assumptions compared to NE. Furthermore, there are Stackelberg equilibria for MFGs \citep{elie2019tale, carmona2021finite, carmona2022mean, vasal2022master} where one principal tries to optimally incentivize a mean field of infinitely many agents. Since a detailed discussion of these and many more MFG equilibrium concepts is beyond the scope of our paper, we focus on NE, $\QRE$, $\BE$, and RE instead and refer to the above references and \citet{fudenberg1991game} for a general, not MFG specific overview of equilibrium concepts.

\subsubsection{Receding horizon equilibria.}
Next, we describe a second method to model bounded rationality that can be combined with the previously defined equilibrium concepts. We introduce receding horizon (RH) equilibria to model limited lookahead capacity of agents by considering a shorter horizon for the underlying Bellman recursion. They describe the behaviour of agents with a model predictive control (MPC) \cite{kouvaritakis2016model}, where decisions are based on a shortened future horizon, and therefore allows for more realistic or practical MFG models.

The previously defined objectives $J$, e.g. for NE or RE, are sums over the whole time horizon $\mathcal T$. In the RH scenario, however, agents plan ahead only the next $H$ time steps beyond the current time $t\in \mathcal T$.
Like in MPC, we assume that agents apply the first action of the resulting policy and then repeat the optimization for the next time step.
A RH equilibrium is thus an ensemble of sequential MFG equilibria.

For RH MFG, define the respective RH NE as follows.
For an agent at time $t$, the RH objective given the MF policy $\pi$ is
        \begin{align*}
         J^H_{t} (\hat\pi, \pi)
        \coloneqq  \E \left[ \sum_{t' = t}^{\min (T, t+H)} r \left(x_{t'}, u_{t'}, \mu_{t, t'}^H \right) \right],
    \end{align*}
with $\mu_{t}^H = \Gamma_{\mathcal M, t}^H \left(\pi\right)$, 
defined by initial $\mu_{t,t}^H$ and the recursion
\begin{align*}
    \mu_{t, t'+1}^H (x') = \sum_{x \in \mathcal X} \mu_{t, t'}^H(x) \sum_{u \in \mathcal U} \pi_{t'}(u \mid x) p_{t'}(x' \mid x, u, \mu_{t, t'}^H ).
\end{align*}
\begin{definition}[RH NE]
    For a horizon $H \in \mathbb N$, a RH NE is a policy ensemble $(\pi^{*,H}_{t})_{t \in \mathcal T} \in \Pi^{T}$ such that for all $t \in \mathcal T$
    \begin{align*}
        \pi^{*, H}_{t} \in \argmax_{\pi \in \Pi} J^H_{t} (\pi, \pi^{*, H} _{t}),\textrm{ with} \, \mu_{t}^H = \Gamma_{\mathcal M, t}^H \left(\pi^{*, H}_{t} \right),
    \end{align*}
where the initial MF for each MFG is the MF of the previous MFG after one time step, i.e. $\mu_{t, t}^H = \mu_{t-1, t}^H$ for all $t>0$ and $\mu_{0, 0}^H = \mu_{0}$.
$(J_{t}^H)_{t \in \mathcal T}$ is the corresponding RH objective ensemble. Since MFs may deviate in practice and the horizon moves forward by one after each time step $t$, agents only implement the first entry $\pi^{*, H}_{t, t}$ of each policy. Thus, the implemented equilibrium policy $\pi^{**,H} \in \Pi$ results from the policy ensemble $(\pi^{*, H}_{t})_{t \in \mathcal T} \in \Pi^{T}$ by taking
    \begin{align*}
        \pi^{**, H}_{t} = \pi^{*, H}_{t, t},
    \end{align*}
    for each $t \in \mathcal{T}$, i.e., the diagonal $\pi^{**,H} = \mathrm{diag} ((\pi^{*, H}_{t})_{t \in \mathcal T})$.
\end{definition}

The RH concept extends to regularized equilibria by changing the corresponding objective. Accordingly, we define an approximate $\epsilon$-RH RE such that for any $t$, $\pi^{*, H}_{t}$ is $\epsilon$-optimal instead of exactly maximizing $\pi \mapsto J^H_{t} (\pi, \pi^{*, H} _{t})$, and we define their exploitability as the sum of exploitabilities in each sub-MFG at time $t$. In a similar fashion we define the RH QRE in the appendix.

\subsection{Connections between Equilibrium Notions} \label{sec:eq-explain}

\begin{figure}
    \centering
    \includegraphics[width=0.99\linewidth]{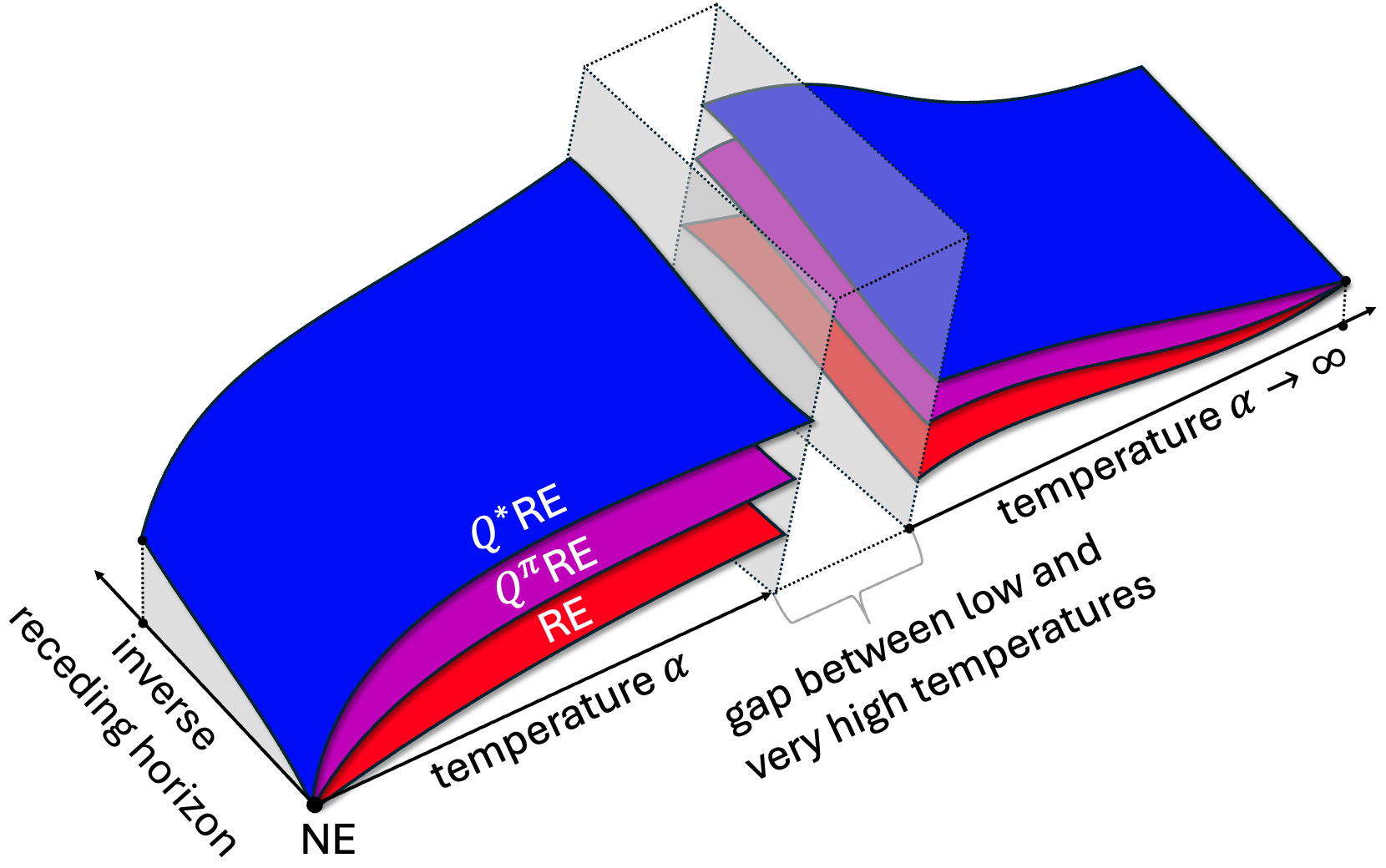}
    \caption{A visualization of the resulting equilibrium policies (one-dimensional for illustration) of QRE and RE over temperature and receding horizon. In the limit of low temperature and infinite horizon, all concepts become the MFNE. In the limit of infinite temperature, all solutions become the constant uniform policy.}
    \label{fig:vis_overview}
\end{figure}  

In the following, we compare the notions of equilibria introduced in the prequel. In all mentioned equilibria, $\pi^*$ can be written down using the equivalent fixed-point equations
\begin{subequations} \label{eq:equilibria}
\begin{align}
    \text{(NE)} \quad \pi^* &= \Gamma_{\Pi}^{*}(\Gamma_{Q^*}(\Gamma_{\mathcal M}(\pi^*))),\\
    (\QRE) \quad \pi^* &= \Gamma^{1/\lambda}_\Pi(\Gamma_{Q^{\pi}}(\Gamma_{\mathcal M}(\pi^*), \pi^*)), \\
    (\BE) \quad \pi^* &= \Gamma^{1/\lambda}_{\Pi}(\Gamma_{Q^*}(\Gamma_{\mathcal M}(\pi^*))), \\
    \text{(RE)} \quad \pi^* &= \Gamma^\alpha_{\Pi}(\Gamma^\alpha_{\tilde Q}(\Gamma_{\mathcal M}(\pi^*))).
\end{align}
\end{subequations}

The decomposed definition enables a comparison between different equilibria. If we choose $\alpha = 1/\lambda$, the $\QRE$, $\BE$ and RE policies are all of $\alpha$-softmax form. Therefore, although the equilibria have various differing derivations, their special cases considered here are connected.

\paragraph{Distinctiveness of equilibrium concepts.}
In general, the equilibrium notions are distinct. Here, we look at the special cases of entropy-regularized RE and $\QRE$ / $\BE$ with Gumbel noise (LQRE / BE) where the difference between $\QRE$, $\BE$ and RE is the usage of \textit{policy}, \textit{optimal} and \textit{soft} state-action value functions respectively. As $\alpha \to 0$, RE are known to essentially become NE in the sense of exploitability in the finite system \citep{cui2021approximately}. Furthermore, as $1/\lambda \to 0$, the Gumbel noise vanishes in $\QRE$ / $\BE$. Thus, in the low temperature limit, all equilibrium notions are equivalent which is visualized in Figure~\ref{fig:vis_overview}. On the other hand, as $\alpha = 1/\lambda \to \infty$, by Assm.~\ref{ass:cont}, each solution tends to the uniform policy. However, for intermediate temperatures $\alpha$, the equilibria are distinct, see Figure~\ref{fig:vis_overview}. Analogously, for RH equilibria we have convergence to standard non-RH equilibria as the horizon becomes large, $H \to \infty$.

\paragraph{$\QRE$ is a first order approximation of RE.}
Although $\QRE$ and RE are distinct, we find a connection between both through a principled approximation. Indeed, let $\alpha = 1 / \lambda$, then the $\QRE$ is obtained by solving a recursive first order approximation of the soft state-action value function.
\begin{theorem} \label{thm:first-order-approx}
    $\QRE$ are obtained from RE by first-order approximation of the smooth-maximum Bellman equation \eqref{eq:smoothQ}.
\end{theorem}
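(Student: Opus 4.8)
The plan is to reduce the theorem to a single per-step statement about the backup operator that distinguishes RE from $\QRE$, and to show that linearizing the smooth-maximum backup reproduces the policy-evaluation backup. Throughout I fix the mean field $\mu = \Gamma_{\mathcal M}(\pi^*)$ and set $\alpha = 1/\lambda$, so that both equilibria extract their policy through the \emph{same} softmax map $\Gamma^{1/\lambda}_\Pi = \Gamma^{\alpha}_\Pi$; the only structural difference between \eqref{eq:smoothQ} and the $Q^\pi$ recursion lies in how the continuation value is formed. For RE it is the smooth maximum $G(q) \coloneqq \alpha \log \sum_{u'} \exp(q_{u'}/\alpha)$ applied to $q = \tilde Q^{\mu,\alpha}_{t+1}(x',\cdot)$, whereas for $\QRE$ it is the policy-weighted average $\sum_{u'} \pi_{t+1}(u'\mid x')\, Q^\pi_{t+1}(x',u')$. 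Hence it suffices to prove that the first-order approximation of $G$ equals the softmax-weighted expectation of its argument.

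The key computation is the gradient identity $\partial G/\partial q_{u'} = \exp(q_{u'}/\alpha)\big/\sum_{u''}\exp(q_{u''}/\alpha)$, i.e. the partial derivatives of the smooth maximum are exactly the softmax probabilities $\Gamma^{1/\lambda}_\Pi(q)(u')$. Using $\log \Gamma^{1/\lambda}_\Pi(q)(u') = q_{u'}/\alpha - \log\sum_{u''}\exp(q_{u''}/\alpha)$, I would then establish the exact decomposition $G(q) = \sum_{u'} \Gamma^{1/\lambda}_\Pi(q)(u')\, q_{u'} + \alpha\,\mathcal H\!\left(\Gamma^{1/\lambda}_\Pi(q)\right)$, which splits the smooth maximum into its tangent (linear-in-$q$) term, namely the softmax-weighted expectation of $q$, and the nonnegative entropy remainder $\alpha\,\mathcal H$. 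Identifying the linear term as the first-order part of $G$ and $\alpha\,\mathcal H$ as the discarded convexity contribution, the first-order approximation of the smooth-maximum backup is precisely the $\QRE$ continuation value, with $\pi_{t+1} = \Gamma^{1/\lambda}_\Pi(Q^\pi_{t+1}(x',\cdot))$.

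With this per-step identity in hand, I would substitute the linearized backup into \eqref{eq:smoothQ} and verify, by backward induction over $t$ (the base case $t = T-1$ being identical, since both recursions terminate at $r(\cdot,\cdot,\mu_{T-1})$), that the resulting recursion is exactly $\Gamma_{Q^\pi}(\mu,\pi^*)$. Applying the softmax map to its solution and closing the loop through $\mu = \Gamma_{\mathcal M}(\pi^*)$ then recovers the $\QRE$ fixed-point equation from \eqref{eq:equilibria}, establishing that $\QRE$ solve the first-order-approximated smooth-maximum recursion.

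The main obstacle is conceptual rather than computational: making the phrase ``first-order approximation'' precise so that the softmax weights appear as the linearization coefficients while the discarded entropy bonus $\alpha\,\mathcal H$ is genuinely the leftover term. The subtlety is that this remainder is state-dependent (through $x'$) and is propagated by the transition kernel, so dropping it really does change the value function --- consistent with $\QRE$ and RE being distinct for intermediate temperatures. I would address this by presenting the splitting as the Fenchel--Young gap of the convex log-sum-exp operator, emphasizing that retaining the tangent's linear part and discarding the convexity gap is exactly a first-order (tangent) approximation, and by noting that $\alpha\,\mathcal H \to 0$ as $\alpha \to 0$, so that the approximation collapses together with both concepts to the MFNE in the low-temperature limit.
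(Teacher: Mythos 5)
Your proposal is correct and follows essentially the same route as the paper's proof: both rest on the gradient identity that $\nabla g$ of the log-sum-exp function $g(q) = \alpha \log \sum_{u'} \exp(q_{u'}/\alpha)$ is the softmax policy, and both identify the $\QRE$ continuation value $\langle \nabla g(q), q \rangle$ as the retained first-order (tangent) part of the smooth-maximum backup $g(q)$, propagated backwards through the Bellman recursion. The only difference is how the discarded term is recorded: the paper writes it as the Taylor constant $g(\mathbf{0}) = \alpha \log n$, whereas your exact entropy decomposition $g(q) = \langle \nabla g(q), q \rangle + \alpha \mathcal{H}\left(\nabla g(q)\right)$ pins down the remainder precisely (it lies in $[0, \alpha \log n]$ and vanishes as $\alpha \to 0$), a mildly sharper bookkeeping of the same computation.
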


\begin{proof}[Proof of Thm.~\ref{thm:first-order-approx}]
Assume w.l.o.g. that $\vert \mathcal{U} \vert = n$ for some finite $n \in \mathbb N$ and define the function $g \colon \mathbb{R}^n \to \mathbb{R}$, $g(x_1, \ldots, x_n) = \alpha \log \left( \sum_{i =1}^n \exp \left(x_i / \alpha  \right) \right)$
such that
\begin{align*}
    \frac{\partial}{\partial x_j} g (x_1, \ldots, x_n) = \frac{\exp \left(x_j / \alpha\right)}{\sum_{i \leq n} \exp \left(x_i / \alpha\right)} \, .
\end{align*}

Then, the smooth-maximum Bellman recursion in the context of regularized equilibria can be rewritten as 
\begin{multline*}
   \tilde{Q}_{t}^{\mu, \alpha}(x,u) = r (x, u, \mu_t) \\+ \sum_{x' \in \mathcal{X}} p_t (x' \mid x, u, \mu_t) \cdot g \left( \tilde{Q}_{t+1, \alpha}^{\mu}(x', \cdot)  \right) \, .
\end{multline*}
Similarly, the Bellman equation in the $\QRE$ setup with Gumbel noise and $\lambda = 1/ \alpha$ can be expressed as
\begin{multline*}
    Q^{\mu, *}_t(x,u) = r(x,u,\mu_t) \\
    + \sum_{x' \in \mathcal{X}} p_t(x'\mid x,u,\mu_t) \scal{\nabla g( Q^{\mu, *}_{t+1}(x', \cdot))}{Q^{\mu, *}_{t+1}(x', \cdot)}
\end{multline*}
by using $\pi^*_t ( \cdot \vert x) = \nabla g( Q^{\mu, *}_{t}(x, \cdot))$, and $g(\mathbf 0) = \alpha \log n$. Thus, we interpret the $\QRE$ Bellman equation as a first order approximation of the smooth-maximum Bellman recursion, with added error term $\alpha \log n \to 0$ as $\alpha \to 0$. Here, we recursively estimate $\tilde Q$ through $Q^*$ backwards in $t\in \mathcal T$.
\end{proof}
Thm.~\ref{thm:first-order-approx} establishes a rigorous connection between $\QRE$ and RE, in the common case of Gumbel noise and entropy regularization. While both concepts share similarities, our empirical evaluations demonstrate that they can yield different results in general, see the experiments section.

\section{Learning Non-Cooperative MF Equilibria}\label{sec:algorithms}
There is a variety of methods for computing or learning NE in MFGs, each with its own limitations. Standard fixed-point iteration (FPI) is not guaranteed to converge, as it cannot be Lipschitz even in simple standard finite MFGs \citep[Thm.~2]{cui2021approximately}. Meanwhile, other algorithms such as fictitious play (FP) and Online Mirror Descent require monotonicity and their theory is currently limited to dynamics independent of the mean field \citep{perrin2020fictitious, perolat2021scaling}. Other recent ideas include an optimization-based approach \citep{guo2024mf} and a regularization approach \citep{guo2019learning, cui2021approximately}, for which convergence via FPI is guaranteed for strong enough regularization. In the following, we generalize FPI and FP to our equilibrium concepts of interest.

\begin{algorithm}
    \caption{Generalized Fixed-Point Iteration (GFPI).}
    \label{alg1}
    \begin{algorithmic}[1]
        \STATE Input: Temperature $\alpha = 1/\lambda>0$, initial policy $\pi^0$, equilibrium type ET $\in \{\text{NE}, \QRE, \BE, \text{RE}\}$.     
        \STATE Define $\Gamma_{\Pi}$ and $\Gamma_{Q}$ according to ET (Eq.~\ref{eq:equilibria}).
        \FOR {$k=0,1,\ldots$,K-1}
         \STATE Evaluate $\pi^{k+1} \leftarrow \Gamma_{\Pi}(\Gamma_{Q}(\Gamma_{\mathcal M}(\pi^k),\pi^k))$.
        \ENDFOR
        \STATE \textbf{return} $\pi^K$
    \end{algorithmic}
\end{algorithm}

\paragraph{Fixed point iteration.}
In the FPI approach, one repeatedly computes the result of the right-hand side of the fixed point equations \eqref{eq:equilibria}. 
Start with some initial policy, e.g., $\pi^{0}_t(u \mid x) = 1 / |\mathcal U|$ for all $t,x,u$. Then, for each iteration $k=0,1,\ldots$, compute the resulting MF by solving the fixed point equation, the resulting value functions under the new MF, and finally the new policy, e.g. for RE as
\begin{align*}
    \pi^{k+1} = \Gamma^\alpha_\Pi(\Gamma^\alpha_{\tilde Q}(\Gamma_{\mathcal M}(\pi^k))).
\end{align*}
For NE, $\QRE$ and $\BE$ the operators are changed according to the desired setting as in Eqs.~\eqref{eq:equilibria}. The overall generalized FPI (GFPI) algorithm is given in Alg.~\ref{alg1} and converges for sufficiently high temperatures under Lipschitz conditions. 

\begin{assumption} \label{ass:Lcont}
    The transition kernel $P$ and reward function $r$ are Lipschitz continuous in their MF argument.
\end{assumption}

\begin{proposition} \label{prop:disc-conv-fpi}
    In MFGs, under Assm.~\ref{ass:Lcont}, FPI converges to a $\QRE$ / $\BE$ / RE  for sufficiently high $\alpha > 0$.
\end{proposition}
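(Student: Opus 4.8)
The plan is to prove convergence via the Banach fixed-point theorem by showing that the composite update operator
\[
    \Phi(\pi) \coloneqq \Gamma_{\Pi}\bigl(\Gamma_Q(\Gamma_{\mathcal M}(\pi), \pi)\bigr),
\]
with $\Gamma_{\Pi}, \Gamma_Q$ specialized to the equilibrium type as in Eqs.~\eqref{eq:equilibria}, is a contraction on $\Pi$ once $\alpha$ is large enough. Since softmax always returns valid distributions, $\Phi$ is a self-map $\Phi\colon \Pi \to \Pi$, and $\Pi \equiv \mathcal P(\mathcal U)^{\mathcal X \times \mathcal T}$ is a finite product of probability simplices, hence compact and complete. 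The crucial observation is that across all three concepts the outer operator is a softmax $\Gamma_\Pi^\alpha$ (resp. $\Gamma_\Pi^{1/\lambda}$ with $\alpha = 1/\lambda$) whose sensitivity to its input vanishes as the temperature grows. A contraction constant below $1$ therefore yields a unique fixed point together with geometric convergence of the iterates $\pi^{k+1} = \Phi(\pi^k)$; by the fixed-point characterizations in Eqs.~\eqref{eq:equilibria}, this fixed point is exactly the desired $\QRE$ / $\BE$ / RE.

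The argument decomposes into $\alpha$-independent Lipschitz bounds for each constituent operator. First, I would show that under Assm.~\ref{ass:Lcont} the mean field operator $\Gamma_{\mathcal M}$ is Lipschitz in $\pi$: its recursion is an affine-in-$\pi$ composition of the Lipschitz kernels $p_t$, so unrolling over the horizon $\mathcal T$ yields a Lipschitz constant $L_{\mathcal M}$ depending on $T$ and the kernel/reward moduli but not on $\alpha$. Second, I would bound the value-function operator: for $\BE$/RE, $\Gamma_{Q^*}$ and $\Gamma^\alpha_{\tilde Q}$ are Lipschitz in $\mu$ by unrolling the Bellman recursions and using Lipschitzness of $r$ and $p_t$ together with nonexpansiveness of $\max$ / the smooth-max; for $\QRE$, $\Gamma_{Q^\pi}$ depends on $\pi$ both through $\mu$ and directly through the $\pi$-weighted backup, so I would establish separate Lipschitz constants in each argument.

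Third, I would bound the softmax operator itself. Its Jacobian entries are $\tfrac{1}{\alpha}\bigl(\pi_u \delta_{uu'} - \pi_u \pi_{u'}\bigr)$, so $\Gamma_\Pi^\alpha$ is Lipschitz in $Q$ with constant $C/\alpha$ for a universal $C$ depending only on $\abs{\mathcal U}$. Composing the three bounds, $\Phi$ is Lipschitz with constant at most $\tfrac{C}{\alpha}\bigl(L_Q L_{\mathcal M} + L_{Q,\pi}\bigr)$, where the second term is present only in the $\QRE$ case, and this drops below $1$ for every $\alpha$ exceeding an explicit threshold. Banach's theorem then delivers both existence of the unique equilibrium and convergence of FPI to it.

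The main obstacle is controlling how the constituent Lipschitz constants, especially $L_{\mathcal M}$ and the value-function modulus, compound across the horizon $T$---these can in principle grow exponentially in $T$. This is not fatal, precisely because only the softmax factor carries the $\alpha$-dependence: whatever finite, $\alpha$-independent bound the horizon produces is absorbed by taking $\alpha$ large enough, matching the ``sufficiently high $\alpha$'' in the statement. The genuinely new element relative to the known $\BE$ argument of \citet{cui2021approximately} is the direct $\pi$-dependence of $\Gamma_{Q^\pi}$ in the $\QRE$ case, which is why I would separately verify that $\Gamma_{Q^\pi}$ is Lipschitz in $\pi$ with $\mu$ held fixed.
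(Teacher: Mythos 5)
Your proposal follows essentially the same route as the paper's proof: decompose the fixed-point map into $\Gamma_{\mathcal M}$, the value-function operator, and the softmax $\Gamma^\alpha_\Pi$; establish Lipschitz constants for each; observe that only the softmax carries a $1/\alpha$ factor; and conclude contraction plus convergence via Banach. The paper is merely terser---it cites \citet[Thm.~3]{cui2021approximately} for the $\BE$/RE cases and \citet[Prop.~4]{gao2017properties} for the $C/\alpha$ softmax bound, and handles the genuinely new $\QRE$ case exactly as you do, through the joint map $\pi \mapsto \Gamma_{Q}(\Gamma_{\mathcal M}(\pi), \pi)$.

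One step in your plan would fail as literally written, though only in the RE branch. You claim the value-function modulus is $\alpha$-independent, but the smooth-maximum value function is not uniformly bounded in $\alpha$: each backup adds an entropy bonus of order $\alpha \log \abs{\mathcal U}$, so $\norm{\tilde Q^{\mu,\alpha}}_\infty$ grows linearly in $\alpha$. Unrolling the recursion for two mean fields $\mu, \mu'$ produces the cross term $\sum_{x'} \bigl(p_t(x' \mid x,u,\mu_t) - p_t(x' \mid x,u,\mu'_t)\bigr) V_{t+1}(x')$, and the naive bound $L_p \norm{\mu_t - \mu'_t} \, \norm{V_{t+1}}_\infty$ then yields a Lipschitz constant $L_Q \sim c_1 + c_2 \alpha$; composing with the $C/\alpha$ softmax factor gives a limit $C c_2 L_{\mathcal M}$ that need not be below $1$, so ``sufficiently large $\alpha$'' no longer rescues the argument. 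The repair is standard but must be stated: since both kernels sum to one over $x'$, you may re-center $V_{t+1}$ by any constant, so only its span $\max_{x} V_{t+1}(x) - \min_x V_{t+1}(x)$ enters; because the log-sum-exp satisfies $\mathrm{lse}(Q + c) = \mathrm{lse}(Q) + c$ and is nonexpansive, the span of $\tilde Q_t$ is bounded by $2(T-t)\norm{r}_\infty$ independently of $\alpha$ (equivalently, use that softmax is invariant to constant shifts of its argument). Your $\QRE$ and $\BE$ branches do not need this, since $\norm{Q^\pi}_\infty, \norm{Q^*}_\infty \leq T \norm{r}_\infty$ uniformly in $\alpha$; this is consistent with the paper, which proves only the $\QRE$ case and delegates RE to the cited reference where this subtlety is handled.
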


Such a result is known \citep{cui2021approximately} for RE and BE ($\BE$) and extended towards $\QRE$ here.
As a result, the convergence of FPI follows for high temperatures. 

\begin{corollary}
    By Banach's fixed-point theorem, regularized FPI converges to an equilibrium for sufficiently large $\alpha > 0$.
\end{corollary}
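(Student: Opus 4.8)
The plan is to read the regularized FPI update of Alg.~\ref{alg1} as repeated application of a single self-map $\Phi_\alpha \colon \Pi \to \Pi$ on the policy space, and then to verify the two hypotheses of Banach's fixed-point theorem: completeness of the domain and strict contractivity of the map. For the RE case the update is $\pi^{k+1} = \Phi_\alpha(\pi^k)$ with $\Phi_\alpha \coloneqq \Gamma^\alpha_\Pi \circ \Gamma^\alpha_{\tilde Q} \circ \Gamma_{\mathcal M}$, and analogously $\Phi_\alpha \coloneqq \Gamma^{1/\lambda}_\Pi \circ \Gamma_{Q^*} \circ \Gamma_{\mathcal M}$ for $\BE$ and $\Phi_\alpha(\pi) \coloneqq \Gamma^{1/\lambda}_\Pi(\Gamma_{Q^{\pi}}(\Gamma_{\mathcal M}(\pi), \pi))$ for $\QRE$, following Eqs.~\eqref{eq:equilibria}. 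Since $\Pi = \mathcal P(\mathcal U)^{\mathcal X \times \mathcal T}$ is a finite product of probability simplices over finite sets, under the $\sup$ metric it is a closed and bounded subset of a finite-dimensional normed space, hence a complete (indeed compact) metric space, which settles the first hypothesis.

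For the second hypothesis I would invoke Prop.~\ref{prop:disc-conv-fpi}, whose underlying Lipschitz bookkeeping shows that $\Phi_\alpha$ is a contraction once $\alpha$ is large enough. The argument I would assemble composes three estimates. Under Assm.~\ref{ass:Lcont}, the forward map $\Gamma_{\mathcal M}$ is Lipschitz in $\pi$ with a constant independent of $\alpha$ (depending only on $L_p$ and $T$); the backward value map $\Gamma^\alpha_{\tilde Q}$ (resp.\ $\Gamma_{Q^*}$, $\Gamma_{Q^{\pi}}$) is Lipschitz in $\mu$ with a constant that is also bounded uniformly in $\alpha$, because the log-sum-exp is $1$-Lipschitz in the $\sup$ norm and its $\mu$-sensitivity only sees the \emph{oscillation} of the soft value function across states, which is controlled by the reward bound and $T$ rather than by the $\alpha \log|\mathcal U|$ entropy level; and finally the softmax map $\Gamma^\alpha_\Pi$ has Lipschitz constant of order $1/\alpha$ in its state-action value argument. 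Multiplying these, the Lipschitz constant of $\Phi_\alpha$ is of order $1/\alpha$ and thus falls below one for all sufficiently large $\alpha$.

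With $\Phi_\alpha$ a contraction on the complete space $\Pi$, Banach's fixed-point theorem then supplies a unique fixed point $\pi^* \in \Pi$ together with geometric convergence $\norm{\pi^k - \pi^*} \to 0$ of the FPI iterates, regardless of the initialization $\pi^0$. Unwinding the definition of $\Phi_\alpha$ shows that $\pi^*$ satisfies exactly the corresponding fixed-point equation in Eqs.~\eqref{eq:equilibria}, so the limit is the claimed RE / $\BE$ / $\QRE$.

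I expect the contraction estimate of the middle paragraph to be the only real obstacle; the Banach application itself is immediate once contractivity is in hand. The delicate point is that a naive bound on $\Gamma^\alpha_{\tilde Q}$ using the \emph{magnitude} of the soft value function would grow linearly in $\alpha$ and exactly cancel the $1/\alpha$ gain from the softmax, leaving an $O(1)$ constant; the fix is to bound the $\mu$-sensitivity through the $\alpha$-independent oscillation of the value function across states, which is precisely what makes the product vanish as $\alpha \to \infty$. For $\QRE$ the same scheme applies after observing that the extra $\pi$-dependence of $\Gamma_{Q^{\pi}}$ contributes only an additional $\alpha$-independent Lipschitz term, still dominated by the $1/\alpha$ factor from $\Gamma^{1/\lambda}_\Pi$.
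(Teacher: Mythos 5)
Your proposal is correct and takes essentially the same route as the paper: Prop.~\ref{prop:disc-conv-fpi} establishes that the fixed-point map in Eqs.~\eqref{eq:equilibria} is a contraction for sufficiently large $\alpha$ (forward map and value map Lipschitz uniformly in $\alpha$, softmax Lipschitz of order $1/\alpha$), and the corollary is then the immediate application of Banach's theorem on the complete space $\Pi$, exactly as you do. Your oscillation-based argument showing that the $\mu$-sensitivity of $\Gamma^\alpha_{\tilde Q}$ stays $\alpha$-uniform (rather than growing like the $\alpha \log |\mathcal U|$ entropy level) is a more explicit treatment of a subtlety the paper delegates to \citet[Thm.~3]{cui2021approximately}, but it does not change the structure of the argument.
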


Requiring sufficiently large $\alpha$ limits applicability of the GFPI algorithm.
Therefore, alternate or more general learning algorithms are desired. 

\begin{algorithm}[b]
    \caption{Generalized Fictitious Play (GFP).}
    \label{alg2}
    \begin{algorithmic}[1]
        \STATE Input: Temperature $\alpha > 0$, policy $\pi^0$, $\beta \in (0, 1)$, equilibrium type ET $\in \{\text{NE}, \QRE, \BE, \text{RE}\}$.     
        \STATE Define $\Gamma_{\Pi}$ and $\Gamma_{Q}$ according to ET (Eq.~\ref{eq:equilibria}).
        \STATE Initialize $\mu^0\leftarrow\Gamma_{\mathcal M}(\pi^0)$ as the MF induced by $\pi^0$.
        \STATE Initialize weighted sum of policies $\bar \pi^0 = \pi^0 \mu^0$
        \FOR {$k=0,1,\ldots,K-1$}
        \STATE Evaluate $\pi^{k+1} \leftarrow \Gamma_{\Pi}(\Gamma_{Q}(\mu^k,\pi^k))$.
        \STATE Compute MF $\mu^{k+1}\leftarrow\Gamma_{\mathcal{M}}(\pi^{k+1})$ induced by $\pi^{k+1}$.
        \STATE Compute $\bar\pi^{k+1}  = \beta\bar\pi^{k+1}+(1-\beta) \mu^{k+1} \pi^{k+1}$ .
        \STATE Average MF $\mu^{k+1} \leftarrow (1-\beta)\mu^{k+1} + \beta \mu^k $.
        \STATE Normalize $\pi^{k+1} \propto \bar\pi^{k+1}$.
        \ENDFOR
        \STATE \textbf{return} $\pi^K$.
    \end{algorithmic}
\end{algorithm}
\paragraph{Fictitious play.}
One such algorithm for standard NE is the FP algorithm. Parallel to GFPI, we formulate the generalized FP (GFP) algorithm in Alg.~\ref{alg2} for learning MFG equilibria. For NE, the algorithm matches with the proven FP algorithm in \citet{perrin2020fictitious}, while for $\mathrm{BE}$, RE the algorithm matches with initial experiments in \citet{cui2021approximately}. Thus, GFP is known to converge for NE under certain assumptions \citep{perrin2020fictitious}.
We extend the FP convergence proof in standard MFGs towards RE, and also towards RH equilibria by proposing suitable modifications, see the red plane in Fig.~\ref{fig:vis_overview}. As a side result, we make the existing proof for FP convergence more precise by explicitly verifying the usage of the envelope theorem, which is strictly speaking only applicable for any non-zero regularization, $\alpha > 0$. The proofs for the remaining cases are left to future work. The proof requires a standard monotonicity assumption on the considered MFG, as well as a common simplifying assumption on the system dynamics and is based on a continuous-time ODE version of the algorithm with time-continuous iterations as in \citet{perrin2020fictitious}. Then, the actual algorithm can be interpreted as a discretization of the continuous-time version.
\begin{assumption}[\citet{lasry2007mean}] \label{ass:monotone}
    The game is monotone, i.e. $R_t (x, u, \mu) = r_t (x, \mu) +\bar{r}_t (x, u)$ with differentiable $r_t$ and $\sum_{x \in \mathcal{X}} \left( \mu (x) - \mu' (x) \right) \left( r_t (x, \mu) - r_t (x, \mu')\right) \leq 0$ for all $t \in \mathcal T$ and all $\mu, \mu' \in \mathcal P(\mathcal X)$.
\end{assumption}
\begin{assumption}[\citet{perrin2020fictitious, perolat2021scaling}] \label{ass:pnomf}
    The transition kernel does not depend on the MF.
\end{assumption}
\begin{theorem} \label{thm:gfp1}
    Under Assm.~\ref{ass:monotone} and \ref{ass:pnomf}, the continuous-time version of GFP for RE converges to zero exploitability, $\mathcal{E}^{\mathrm{RE}}(\bar \pi^\tau) \coloneqq \max_{\pi'} J^{\mathrm{RE}}_\alpha(\pi', \bar \pi^\tau) - J^{\mathrm{RE}}_\alpha(\bar \pi^\tau, \bar \pi^\tau) \to 0$ at rate $\mathcal O(\frac 1 \tau)$, with algorithm run time $\tau$.
\end{theorem}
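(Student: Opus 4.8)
The plan is to follow the continuous-time fictitious play analysis of \citet{perrin2020fictitious}, adapting it to the entropy-regularized objective and carefully tracking where the temperature $\alpha > 0$ is actually needed. I would model GFP as an ODE in which the running average policy $\bar\pi^\tau$ (weighted, as in Alg.~\ref{alg2}, by its induced mean field) evolves toward the current regularized best response, schematically $\dot{\bar\pi}^\tau \propto \pi^\tau - \bar\pi^\tau$, where $\pi^\tau = \Gamma^\alpha_\Pi(\Gamma^\alpha_{\tilde Q}(\bar\mu^\tau))$ is the unique softmax best response to the average mean field $\bar\mu^\tau = \Gamma_{\mathcal M}(\bar\pi^\tau)$. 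The goal is then to show that $d(\tau) \coloneqq \mathcal{E}^{\mathrm{RE}}(\bar\pi^\tau)$ decays like $\mathcal O(1/\tau)$.

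First I would establish the regularity needed for the envelope theorem. For fixed $\mu$, the entropy term makes $\pi \mapsto J^{\mathrm{RE}}_\alpha(\pi, \mu)$ strictly concave, so the best response is unique and the value $v(\mu) \coloneqq \max_{\pi'} J^{\mathrm{RE}}_\alpha(\pi', \mu)$ is differentiable with $\partial_\mu v(\mu)$ equal to the partial of $J^{\mathrm{RE}}_\alpha$ in its mean-field slot, evaluated at the maximizer. This is exactly the subtlety the excerpt flags: for $\alpha = 0$ the best response need not be unique and the envelope theorem is not rigorously applicable, whereas any $\alpha > 0$ repairs it.

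Next I would differentiate $d(\tau) = v(\bar\mu^\tau) - J^{\mathrm{RE}}_\alpha(\bar\pi^\tau, \bar\mu^\tau)$ along the flow. Here Assm.~\ref{ass:pnomf} is essential: with MF-independent transitions, $\Gamma_{\mathcal M}$ is affine in the policy, so $\bar\mu^\tau$ is the time-average of the best-response-induced mean fields and $\dot{\bar\mu}^\tau$ is explicit and linear in $\pi^\tau - \bar\pi^\tau$. Applying the envelope theorem to the first term and the chain rule to the second, the explicit action-dependent reward part $\bar r_t(x,u)$ cancels against the best-response value, leaving a coupling term in the mean-field-dependent reward $r_t(x,\mu)$. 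The monotonicity inequality of Assm.~\ref{ass:monotone} then forces this coupling term to be non-positive, which is the structural heart of the argument.

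Finally I would assemble these into a differential inequality for $d(\tau)$. Following the standard Lyapunov computation, the monotone coupling yields a bound of the form $\frac{d}{d\tau}\bigl(\tau\, d(\tau)\bigr) \le 0$ (after absorbing the averaging weight), whence $d(\tau) \le C/\tau$. The main obstacle I anticipate is the third step: correctly book-keeping the two ways $\tau$ enters the exploitability, through the explicit deviation-policy slot and through the induced mean field $\bar\mu^\tau$, and verifying that, after the envelope cancellation, precisely the monotone quadratic form $\sum_{x}(\mu-\mu')(r_t(x,\mu)-r_t(x,\mu'))$ survives with the correct sign. The separable reward structure in Assm.~\ref{ass:monotone} is what makes this cancellation clean, and the finite-horizon, time-dependent value functions $\Gamma^\alpha_{\tilde Q}(\bar\mu^\tau)$ require care to ensure the differentiability used above holds uniformly along the trajectory. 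Relating this continuous-time ODE to the discrete Alg.~\ref{alg2} is a separate discretization argument and is not needed for the stated claim.
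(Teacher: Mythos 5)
Your proposal is correct and takes essentially the same approach as the paper's proof: the continuous-time averaged FP process, the envelope theorem justified by uniqueness of the $\alpha$-regularized best response, Assumption~\ref{ass:pnomf} to identify $\bar\mu^\tau$ with the mean field induced by $\bar\pi^\tau$, monotonicity (Assumption~\ref{ass:monotone}) to sign the mean-field reward coupling term, and the differential inequality $\frac{\mathrm d}{\mathrm d\tau}\bigl(\tau\,\mathcal{E}^{\mathrm{RE}}(\bar\pi^\tau)\bigr)\le 0$, which is equivalent to the paper's Gronwall bound $\frac{\mathrm d}{\mathrm d\tau}\mathcal{E}^{\mathrm{RE}}(\bar\pi^\tau)\le -\frac{1}{\tau}\mathcal{E}^{\mathrm{RE}}(\bar\pi^\tau)$. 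The remaining differences are purely presentational (the paper expands the derivative of the exploitability term by term, killing the entropy term via normalization and handling the action-dependent reward part by noting its $\tau$-derivative vanishes rather than by an explicit cancellation), so your outline matches the paper's argument.
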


\begin{algorithm}[b!]
    \caption{Sequential RH-GFP.}
    \label{alg3}
    \begin{algorithmic}[1]
        \STATE Input: Temperature $\alpha > 0$, policy $\pi^0$, $\beta \in (0, 1)$, equilibrium type ET $\in \{\text{NE}, \QRE, \BE, \text{RE}\}$, receding horizon $H$.     
        \STATE Define $\Gamma_{\Pi}$ and $\Gamma_{Q}$ according to ET (Eq.~\ref{eq:equilibria}) with horizon $H$.
        \FOR{$t=0,1,\ldots,T-1$}
                        \STATE \lIf{$t=0$} {$\mu_{t,0}^0 = \mu_{0}$} \lElse {$\mu_{t,0}^0 = \mu_{t-1,1}^{K}$}
        \STATE Initialize $\mu_t^0\leftarrow\Gamma_{\mathcal M}(\pi^0)$ as the MF induced by $\pi^0$ with initial MF $\mu_{t,0}^0$.  
        \STATE Initialize weighted sum of policies $\bar \pi^0_t = \pi^0 \mu^0_t$.
            \FOR {$k=0,1,\ldots,K-1$} 
        \STATE Evaluate lines 6-10 of Alg. 2.
        \ENDFOR
        \ENDFOR
        \STATE \textbf{return} $\pi_t^K$ for $t=0,1,\ldots,T-1$.
    \end{algorithmic}
\end{algorithm}
\paragraph{Computation of RH equilibria.}
RH equilibria can be computed in a sequential manner by applying the GFPI or GFP algorithm for the arising MFGs at each time step. The result of each algorithm yields the initial condition for the next MFG, which starts with the previous MF after one time step. We summarize the sequential RH-GFP variant in Alg.~\ref{alg3}. 

\begin{theorem} \label{thm:gfp2}
    Under Assm.~\ref{ass:monotone} and \ref{ass:pnomf}, the continuous-time versions of sequential RH-GFP for RE converges to zero exploitability as $\tau \to \infty$.
\end{theorem}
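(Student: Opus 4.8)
The plan is to reduce Theorem~\ref{thm:gfp2} to $T$ applications of Theorem~\ref{thm:gfp1}. By construction (Alg.~\ref{alg3}), sequential RH-GFP solves one sub-MFG per outer time step $t \in \mathcal T$: the sub-MFG at time $t$ is a standard finite-horizon RE MFG over the inner horizon $\{t, \ldots, \min(T, t+H)\}$ with a \emph{fixed} initial mean field $\mu_{t,t}^H$. Since the RH exploitability is defined as the sum $\sum_{t \in \mathcal T} \mathcal E_t^{\mathrm{RE}}$ of the per-sub-MFG RE exploitabilities, it suffices to show that each summand is driven to zero by its inner continuous-time GFP loop, after which the finite sum tends to zero as well.

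First I would verify that each sub-MFG inherits the hypotheses required by Theorem~\ref{thm:gfp1}. Assm.~\ref{ass:monotone} constrains only the time-local reward decomposition $R_t(x,u,\mu) = r_t(x,\mu) + \bar r_t(x,u)$, and Assm.~\ref{ass:pnomf} constrains only the transition kernel; both are stated pointwise in $t$ and therefore hold verbatim on any sub-horizon $\{t, \ldots, \min(T, t+H)\}$. Consequently, for a fixed initial mean field $\mu_{t,t}^H$, Theorem~\ref{thm:gfp1} applies directly to the sub-MFG at time $t$ and yields $\mathcal E_t^{\mathrm{RE}}(\bar\pi_t^\tau) \to 0$ (at the inherited rate $\mathcal O(1/\tau)$) for its own inner run time $\tau$.

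The remaining issue is the sequential coupling of initial conditions, which I would handle by induction on $t$. The initial mean field $\mu_{t,t}^H = \mu_{t-1,t}^H$ of stage $t$ is obtained by propagating the \emph{solution} of stage $t-1$ forward one step via $\Gamma_{\mathcal M}$. For the base case $t=0$ the initial mean field is the exogenous $\mu_0$, so Theorem~\ref{thm:gfp1} applies immediately and the inner GFP converges to an RE $\pi_0^{*,H}$ with a well-defined limiting MF. For the inductive step, once stage $t-1$ has converged (in the idealized limit where its inner run time tends to infinity), its equilibrium MF advanced by one step fixes a well-defined $\mu_{t,t}^H$, against which Theorem~\ref{thm:gfp1} again gives $\mathcal E_t^{\mathrm{RE}} \to 0$. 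Summing the $T$ vanishing terms yields $\sum_{t \in \mathcal T} \mathcal E_t^{\mathrm{RE}} \to 0$, with an overall $T$-dependent constant that does not affect the asymptotics for fixed horizon.

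The main obstacle I anticipate is making the inter-stage coupling rigorous, since each stage's initial condition is exactly determined only after the previous stage has fully converged. With finitely many inner iterations there would be error propagation along the chain of $T$ stages, and one would need a stability argument showing that the limiting MF of a sub-MFG depends continuously on its initial MF (leveraging the continuity of $\Gamma_{\mathcal M}$ in its argument under Assm.~\ref{ass:cont}/\ref{ass:Lcont}), so that small errors do not amplify. In the continuous-time formulation with $\tau \to \infty$ per stage this difficulty dissolves, because each stage is run to exact convergence before the next initial condition is read off, which keeps the induction clean; the only care needed is to confirm that the $T$-fold sum of vanishing exploitabilities remains controlled.
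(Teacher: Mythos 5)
Your proposal is correct and takes essentially the same route as the paper: the paper likewise reduces the claim to stage-wise applications of Thm.~\ref{thm:gfp1}, where each sub-MFG at time $t$ inherits its initial MF from the previous stage ($\mu_0$ for $t=0$), converges to an $\epsilon$-optimal policy for inner run time $\tau > \tau_t$, and the finitely many vanishing exploitabilities are summed by choosing $\tau^* > \max_{t \in \mathcal T} \tau_t$. One remark: the stability/induction machinery you anticipate for the inter-stage coupling is not actually needed, because the RH exploitability is by definition the sum of exploitabilities in the sub-MFGs induced by the ensemble itself, so stage $t$ is evaluated against exactly the initial condition produced by stage $t-1$'s computed policy --- the same one the sequential algorithm trains against.
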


This sequential approach, however, can be inefficient, especially for long horizons, since every MFG has to wait until the previous MFG converged.
To circumvent this, we propose a parallel algorithm, where in each iteration for each MFG we change the initial condition to the previous MF after one time step. This parallelization is efficient, since the later MFGs start learning equilibria before their initial condition has converged. We summarize the parallel RH-GFP in Alg.~\ref{alg4} in the appendix, where we also provide experiments which highlight the efficiency compared to the sequential approach.

\begin{figure}
    \centering
    \includegraphics{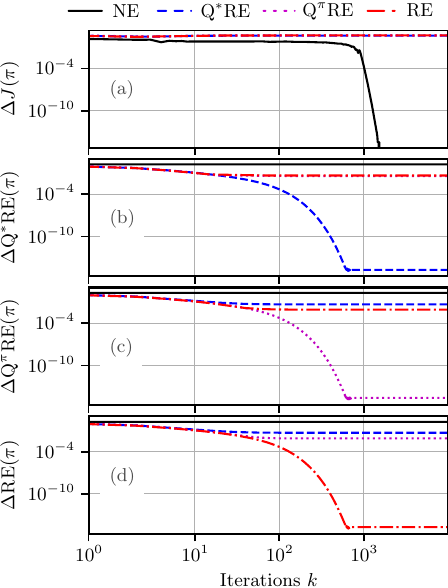}
    \caption{Convergence of GFP for the Susceptible-Infectious-Susceptible MFG with $\alpha = 1.0$ and $\beta=0.95$. The GFP algorithms for $\QRE$, $\BE$ and RE show similar behaviour in the first iterations before converging to their respective equilibria. }
    \label{fig:SIS_GFP}
\end{figure}
\section{Experiments} \label{sec:exp}
In this section we evaluate our algorithms and analyze the different equilibria for several MFGs.
We analyze the efficacy of our methods for a Susceptible-Infectious-Susceptible (SIS) problem and a sequential version of Rock-Paper-Scissor (RPS) game.
Additionally, we evaluate random MFGs, similar to \citet{perolat2021scaling}, by creating random transition and reward matrices and adding a mean-field dependent function to the reward that promotes swarm avoiding behaviour.
Detailed game descriptions are found in the appendix. For code, see https://github.com/yannickeich/QRE-MFG.

To measure algorithm efficiency, we quantify the distance to a $\QRE$ / $\BE$ / RE, similar to exploitability in the NE case.

\begin{definition}[Distance to equilibria]
    The distance of a policy $\pi \in \Pi$ to a $\QRE$ / $\BE$ / RE is defined as
    \begin{align*}
        \Delta \QRE(\pi) &\coloneqq \max_{t \in \mathcal T} \left\Vert \pi_t - \Gamma^{ 1/ \lambda}_{\Pi}(\Gamma_{Q^{\pi}}(\Gamma_{\mathcal M}(\pi), \pi))_t \right\Vert, \\
        \Delta \BE(\pi) &\coloneqq \max_{t \in \mathcal T} \left\Vert \pi_t - \Gamma^{1/ \lambda}_{\Pi}(\Gamma_{Q^*}(\Gamma_{\mathcal M}(\pi)))_t \right\Vert,\\
        \Delta \mathrm{RE}(\pi) &\coloneqq \max_{t \in \mathcal T} \left\Vert \pi_t - \Gamma^\alpha_{\Pi}(\Gamma^\alpha_{\tilde Q}(\Gamma_{\mathcal M}(\pi)))_t \right\Vert.
    \end{align*}
\end{definition}
Note that whenever $\Delta \QRE(\pi)$, $\Delta \BE(\pi)$ or $\Delta \mathrm{RE}(\pi)$ is zero for a policy $\pi$, $\pi$ is a $\QRE$ / $\BE$ / RE of the MFG.
\begin{figure}
    \centering
    \includegraphics{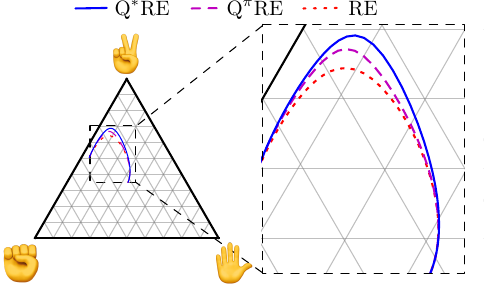}
    \caption{Action probabilities in the Rock-Paper-Scissor problem at $t=0$ for the resulting $\BE$ / $\QRE$ / RE  using GFP (Alg.~\ref{alg2}) over various temperatures $\alpha = 1/\lambda$. As $\alpha \to \infty$, we always obtain the uniform policy in the center, while as $\alpha \to 0$, solutions converge to the Nash solution (to the left). In-between, solutions differ from each other, regardless of the temperature.}
    \label{fig:simplex}
\end{figure} 

First, we employ the GFP algorithm to compute $\QRE$, $\BE$ and RE for the SIS MFG. Figure~\ref{fig:SIS_GFP} illustrates the progress of the algorithms by displaying $\Delta J(\pi),\Delta \QRE(\pi)$, $\Delta \BE(\pi)$ and $\Delta \mathrm{RE}(\pi)$ over the iterations $k$. The results display the efficacy of the GFP algorithm, showing fast convergence to the equilibria. The exploitability plot highlights the bounded rationality of the other equilibria compared to the NE. In exchange for converging to their respective equilibria, GFP for $\QRE$, $\BE$ and RE does not lead to zero $\Delta J(\pi)$. Additionally, Figure~\ref{fig:SIS_GFP} highlights both the similarities and differences of $\QRE$, $\BE$ and RE. Their respective algorithms behave similarly in the first iterations before leading to the distinct equilibria.

To further visualize the distinctiveness of the equilibrium concepts, we compute $\QRE$, $\BE$ and RE with different temperatures $\alpha=1/\lambda$ for the sequential RPS MFG using the GFP algorithm. We indicate the policies of the first time step of the resulting equilibria in the probability simplex in Figure~\ref{fig:simplex}. The comparison experimentally verifies our previous discussion of the connections between the equilibrium notions. The results highlight both the equivalence of the equilibrium concepts for $\alpha \to 0$ (representing the NE) and $\alpha \to \infty$ (representing the uniform policy) and their distictiveness for intermediate temperatures.

Finally, we examine how combining $\QRE$ with RH equilibria can effectively model agents with limited lookahead capacities.
To do this, we employ the parallel RH-GFP algorithm to compute RH QRE of a random MFG for different receding horizons $H$.
Figure~\ref{fig:random_RHQRE} illustrates the distance of the various RH QRE to the $\QRE$ with time horizon $\mathcal T$ over the iterations $k$.
The results demonstrate that RH QRE with higher lookahead are closer to the $\QRE$. Conversely, assuming shorter lookahead capacities results in equilibria that deviate more from the $\QRE$, highlighting the efficacy of modeling bounded rationality with receding horizon.
\begin{figure}
    \centering
    \includegraphics{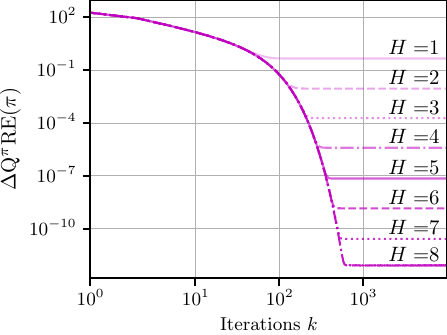}
   \caption{Comparison of the distance of various RH $\QRE$ with different horizons $H$ to the $\QRE$ with total horizon $\mathcal{T}$ for a random MFG with $\alpha=1.0$ and $\beta=0.95$ over iterations $k$. The equilibria induced by assuming shorter lookahead capacities deviate more from the QRE with total horizon, demonstrating the impact of limited lookahead on equilibrium behavior.}
    \label{fig:random_RHQRE}
\end{figure}

We have included additional experiments in the appendix that demonstrate the convergence of our proposed algorithms and provide further insights into the connections between the different equilibrium concepts.

\section{Conclusion}
In this work, we introduced both QRE MFGs and RH MFGs, to incorporate bounded rationality in MFGs for a more realistic modeling of large agent populations. We compared these new equilbrium concepts to existing ones theoretically and empirically. Our analysis highlights the similarities and differences of the discussed equilibrium concepts. Furthermore, we designed general learning algorithms to compute equilibria efficiently and evaluated these algorithms on different problem settings. We hope that the novel RH and QRE MFGs combined with our learning algorithms help bring the theory closer to real-world scenarios with 
not perfectly rational agents. For future work, one could apply our theory and learning approach to research problems where bounded rationality is crucial, e.g. in economics or the social sciences.

\section*{Acknowledgements}
This work has been co-funded by the LOEWE emergenCITY research promotion program of the federal state of Hessen, Germany, by the German Research Foundation (DFG) within the Collaborative Research Center (CRC) 1053 MAKI and project number 517777863, by the Federal Ministry of Education and Research as part of the Software Campus project RL4MFRP (funding code 01IS23067) and by the Hessian Ministry of Science and the Arts (HMWK) within the projects "The Third Wave of Artificial Intelligence - 3AI" and hessian.AI. 


\bibliography{references}

\newpage
\appendix
\onecolumn

\section{A \quad Proof of Proposition~\ref{prop:disc-qre-exist}}
\begin{proof}[Proof of Prop.~\ref{prop:disc-qre-exist}]
    We observe that the functions $\pi \mapsto Q^{\pi}$ and $Q^{\pi} \mapsto \frac{\exp \left(\lambda Q_t^{\pi}(x,u)\right) }{\sum_{u' \in \mathcal{U}}\exp \left(\lambda Q_t^{\pi}(x,u')\right) }$ are continuous for any $\lambda > 0$ due to finiteness of $\mathcal T, \mathcal X, \mathcal U$ as compositions of continuous functions by Assm.~\ref{ass:cont}. Therefore, the existence follows by Brouwer's fixed point theorem.
\end{proof}

\section{B \quad Proof of Proposition~\ref{prop:disc-conv-fpi}} \label{app:prop:disc-conv-fpi}
\begin{proof}[Proof of Prop.~\ref{prop:disc-conv-fpi}]
    For BE and RE, the result is known from \citep[Thm.~3]{cui2021approximately}. For QRE, the proof follows similarly by obtaining Lipschitz continuity of the fixed point map. In particular, the map $\Gamma_{\mathcal M}$ from policy to MF is Lipschitz by Assm.~\ref{ass:Lcont}. Similarly, the map $\pi \mapsto \Gamma_{Q}(\Gamma_{\mathcal M}(\pi), \pi)$ mapping to the state-action value function is Lipschitz. Lastly, the map $\Gamma^{\alpha}_\Pi$ is Lipschitz with a constant that is inversely proportional to $\alpha$ (\citet[Prop.~4]{gao2017properties} or \citet[Lemma~B.7.5]{cui2021approximately}). Therefore, for sufficiently large $\alpha > 0$, we have that $\pi \mapsto \Gamma^\alpha_{\Pi}(\Gamma_{Q}(\Gamma_{\mathcal M}(\pi), \pi))$ is a contraction.
\end{proof}

\section{C \quad Proof of Theorem~\ref{thm:gfp1}} \label{app:thm:gfp1}
For the proof, we define the FP process for the MF at all times $t$ and continuous algorithm iterations $\tau$ as
\begin{align}
\begin{split} \label{eq:fp1}
    \bar \mu^\tau_t &\coloneqq \frac 1 \tau \int_{0}^{\tau} \mu^{\pi^{*, \tau'}}_{t} \mathrm d\tau', \\
    \bar \pi^\tau_t &\propto \frac 1 \tau \int_{0}^{\tau} \mu^{\pi^{*, \tau'}}_{t} \pi^{*, \tau'}_t \mathrm d\tau',
\end{split}
\end{align}
where $\mu^{\pi^{*, \tau'}} = \Gamma_{\mathcal M}(\pi^{*, \tau'})$ is generated by the best regularized response against the current MF $\bar \mu^{\tau'}$, i.e. $\pi^{*, \tau'} = \Gamma^\alpha_{\Pi}(\Gamma^\alpha_{\tilde Q}(\bar \mu^{\tau'})$ for all $\tau' > 1$. The initial values for $\tau \leq 1$ are given by some arbitrary initial policy and its MF, as in \cite{perrin2020fictitious}.

As a result of Assm.~\ref{ass:pnomf}, we keep in mind that
\begin{align*}
    \mu^{\bar \pi^\tau}_{t} = \bar \mu^{\tau}_{t}
\end{align*}
at all times $t$, and hence further for all $t, x,  u$
\begin{align}
\label{eq:proof_property}
    \frac{\mathrm d}{\mathrm d \tau} \mu^{\bar \pi^\tau}_{t}(x) &= \frac{1}{\tau} \left( \mu^{ \pi^{*, \tau'}}_{t}(x) - \mu^{\bar \pi^\tau}_{t}(x) \right) \\
    \label{eq:proof_property2}
    \mu^{\bar \pi^\tau}_{t}(x) \frac{\mathrm d}{\mathrm d\tau} \bar \pi^\tau_h(u \mid x) &= \frac{1}{\tau} \mu^{ \pi^{*, \tau'}}_{t}(x) \left( \pi^{*, \tau}_{t}(u \mid x) - \bar \pi^\tau_h(u \mid x) \right).
\end{align}

Consider the regularized objective $J^{\mathrm{RE}}_\alpha(\pi', \bar \pi^\tau)$ when following $\pi' \in \Pi_H$ under the MF $\bar \mu^\tau$, given as
\begin{align*}
    J^{\mathrm{RE}}_\alpha(\pi', \bar \pi^\tau) = \sum_{t \in \mathcal T} \sum_{x \in \mathcal X} \mu^{\pi'}_{t}(x) \sum_{u \in \mathcal U} \pi'_{t}(u \mid x) \left( r(x, u, \bar \mu^\tau) + \alpha \mathcal H(\pi'_t(\cdot \mid x)) \right)
\end{align*}
and optimized by the best response $\pi^{*, \tau'} = \Gamma^\alpha_{\Pi}(\Gamma^\alpha_{\tilde Q}(\bar \mu^{\tau'})$. The regularized exploitability of $\bar \pi^\tau$ against itself is then defined as
\begin{align*}
    \mathcal{E}^{\mathrm{RE}}(\bar \pi^\tau) \coloneqq \max_{\pi'} J^{\mathrm{RE}}_\alpha(\pi', \bar \pi^\tau) - J^{\mathrm{RE}}_\alpha(\bar \pi^\tau, \bar \pi^\tau)
\end{align*}
and can be analyzed in the following to show the convergence to regularized equilibria.

\begin{proof}[Proof of GFP convergence to RE]
We start by taking the derivative with respect to the algorithm time $\tau$ of the exploitability,
\begin{align*}
    &\frac{\mathrm d}{\mathrm d \tau} \mathcal{E}^{\mathrm{RE}}(\bar \pi^\tau) = \frac{\mathrm d}{\mathrm d\tau} \left( \max_{\pi'} J^{\mathrm{RE}}_\alpha(\pi', \bar \pi^\tau) - J^{\mathrm{RE}}_\alpha(\bar \pi^\tau, \bar \pi^\tau) \right) \\
    &\quad = \sum_{t \in \mathcal T} \sum_{x \in \mathcal X} \mu^{\pi^{*, \tau}_{t}}_{t}(x) \sum_{u \in \mathcal U} \pi^{*, \tau}_{t}(u \mid x) \frac{\mathrm d}{\mathrm d\tau} r(x, u, \bar \mu^\tau) \\
    &\qquad - \sum_{t \in \mathcal T} \sum_{x \in \mathcal X} \frac{\mathrm d}{\mathrm d\tau} \mu^{\bar \pi^\tau}_{t}(x) \sum_{u \in \mathcal U} \bar \pi^\tau_t(u \mid x) \left( r(x, u, \bar \mu^\tau) + \alpha \log(\bar \pi^\tau_t(u \mid x)) \right) \\
    &\qquad - \sum_{t \in \mathcal T} \sum_{x \in \mathcal X} \mu^{\bar \pi^\tau}_{t}(x) \sum_{u \in \mathcal U} \frac{\mathrm d}{\mathrm d\tau} \bar \pi^\tau_t(u \mid x) \left( r(x, u, \bar \mu^\tau) + \alpha \log(\bar \pi^\tau_t(u \mid x)) \right) \\
    &\qquad - \sum_{t \in \mathcal T} \sum_{x \in \mathcal X} \mu^{\bar \pi^\tau}_{t}(x) \sum_{u \in \mathcal U} \bar \pi^\tau_t(u \mid x) \frac{\mathrm d}{\mathrm d\tau} r(x, u, \bar \mu^\tau) \\
    &\qquad - \alpha \sum_{t \in \mathcal T} \sum_{x \in \mathcal X} \mu^{\bar \pi^\tau}_{t}(x) \sum_{u \in \mathcal U} \bar \pi^\tau_t(u \mid x) \frac{\mathrm d}{\mathrm d\tau} \log(\bar \pi^\tau_t(u \mid x)) 
\end{align*}
where the latter equality follows by dropping derivatives w.r.t. $\pi^{*, \tau}$ via envelope theorem \cite{afriat1971theory}, since $\mathcal{E}^{\mathrm{RE}}(\bar \pi^\tau)$ is continuously differentiable in $\tau$ by continuity of the maps $\Gamma^\alpha_{\Pi}$, $\Gamma^\alpha_{\tilde Q}$ and hence also \eqref{eq:fp1}.

First, note that for the last term, we obtain
\begin{align*}
    &\alpha \sum_{t \in \mathcal T} \sum_{x \in \mathcal X} \mu^{\bar \pi^\tau}_{t}(x) \sum_{u \in \mathcal U} \bar \pi^\tau_t(u \mid x) \frac{\mathrm d}{\mathrm d\tau} \log(\bar \pi^\tau_t(u \mid x)) \\
    &\quad = \alpha \sum_{t \in \mathcal T} \sum_{x \in \mathcal X} \sum_{u \in \mathcal U} \mu^{\bar \pi^\tau}_{t}(x) \frac{\mathrm d}{\mathrm d\tau} \bar \pi^\tau_t(u \mid x) \\
    &\quad = \alpha \sum_{t \in \mathcal T} \sum_{x \in \mathcal X} \frac{1}{\tau} \mu^{ \pi^{*, \tau'}}_{t}(x) \left( \sum_{u \in \mathcal U} \pi^{*, \tau}_{t}(u \mid x) - \sum_{u \in \mathcal U} \bar \pi^\tau_t(u \mid x) \right) = 0,
\end{align*}
where the last equation follows from the fact that the sum of the probabilities assigned by a policy to all possible actions is equal to one.
Moreover, for the first and fourth term, w.l.o.g. let $r$ not depend on actions, then we have
\begin{align*}
    &\sum_{t \in \mathcal T} \sum_{x \in \mathcal X} \mu^{\pi^{*, \tau}_{t}}_{t}(x) \sum_{u \in \mathcal U} \pi^{*, \tau}_{t}(u \mid x) \frac{\mathrm d}{\mathrm d\tau} r(x, u, \bar \mu^\tau) \\
    &\qquad - \sum_{t \in \mathcal T} \sum_{x \in \mathcal X} \mu^{\bar \pi^\tau}_{t}(x) \sum_{u \in \mathcal U} \bar \pi^\tau_t(u \mid x) \frac{\mathrm d}{\mathrm d\tau} r(x, u, \bar \mu^\tau) \\
    &\quad = \sum_{t \in \mathcal T} \sum_{x \in \mathcal X} \mu^{\pi^{*, \tau}_{t}}_{t}(x) \sum_{u \in \mathcal U} \pi^{*, \tau}_{t}(u \mid x) \left\langle \nabla_{\bar \mu^\tau} r(x, \bar \mu^\tau), \frac{\mathrm d}{\mathrm d\tau} \bar \mu^\tau \right\rangle \\
    &\qquad - \sum_{t \in \mathcal T} \sum_{x \in \mathcal X} \mu^{\bar \pi^\tau}_{t}(x) \sum_{u \in \mathcal U} \bar \pi^\tau_t(u \mid x) \left\langle \nabla_{\bar \mu^\tau} r(x, \bar \mu^\tau), \frac{\mathrm d}{\mathrm d\tau} \bar \mu^\tau \right\rangle \\
    &\quad = \tau \sum_{t \in \mathcal T} \sum_{x \in \mathcal X} \frac{\mathrm d}{\mathrm d\tau} \mu^{ \pi^{*, \tau'}}_{t}(x) \left\langle \nabla_{\bar \mu^\tau} r(x, \bar \mu^\tau), \frac{\mathrm d}{\mathrm d\tau} \bar \mu^\tau \right\rangle \leq 0, 
\end{align*}
where we use the monotonicity Assumption \ref{ass:monotone}.
Lastly, for the remaining terms (third and fourth), we obtain back the exploitability $- \frac 1 \tau \mathcal{E}^{\mathrm{RE}}(\pi^\tau)$ by applying the properties in  Equations (\ref{eq:proof_property}) and (\ref{eq:proof_property2}). Hence, we have convergence because $\frac{\mathrm d}{\mathrm d \tau} \mathcal{E}^{\mathrm{RE}}(\pi^\tau) \leq - \frac 1 \tau \mathcal{E}^{\mathrm{RE}}(\pi^\tau)$ implies $\mathcal{E}^{\mathrm{RE}}(\pi^\tau) \leq \frac{\mathcal{E}^{\mathrm{RE}}(\pi^0)}{\tau} \to 0$ by Gronwall's inequality.
\end{proof}

\section{D \quad Definitions and Notations for RHE}
In this section we first repeat the defintion of RH NE before introducing RH QRE.

The previously defined objective functions $J$, e.g. corresponding to NE or RE, are defined as sums over the whole time horizon $\mathcal T$. In the receding horizon scenario, however, each agent at time $t \in \mathcal T$ plans ahead by considering only the next $H$ time steps beyond the current time $t$.

For a RH MFG, we define the respective RH NE as follows. Note that the RH concept extends to RE equilibria by changing the corresponding objective.
\begin{definition}[RH NE]
    For a horizon $H \in \mathbb N$, a RH equilibrium is a policy ensemble $(\pi^{*,H}_{t})_{t \in \mathcal T} \in \Pi^{T}$ such that for all $t \in \mathcal T$
    \begin{align*}
        \pi^{*, H}_{t} \in \argmax_{\pi \in \Pi} J^H_{t} (\pi, \pi^{*, H}_{t})
        \coloneqq \argmax_{\pi \in \Pi} \E \left[ \sum_{t' = t}^{\min (T, t+H)} r \left(x_{t'}, u_{t'}, \mu_{t, t'}^H \right) \right] \qquad \textrm{with} \quad \mu_{t}^H = \Gamma_{\mathcal M, t}^H \left(\pi^{*, H}_{t} \right) ,
    \end{align*}
    and $(J_{t}^H)_{t \in \mathcal T}$ is the corresponding RH objective ensemble. Since the observation may change after each time step $t$ formalized by $\mu_{t}^H$, each agent effectively only implements the first time step $\pi^{*, H}_{t, t}$ of each policy. Therefore, the actually implemented equilbrium policy $\pi^{**,H} \in \Pi$ results from the policy ensemble $(\pi^{*, H}_{t})_{t \in \mathcal T} \in \Pi^{T}$ by taking
    \begin{align*}
        \pi^{**, H}_{t} = \pi^{*, H}_{t, t},
    \end{align*}
    for each $t \in \mathcal{T}$ and can be seen as the diagonal $\pi^{**,H} = \mathrm{diag} ((\pi^{*, H}_{t})_{t \in \mathcal T})$ of ensemble $(\pi^{*, H}_{t})_{t \in \mathcal T}$.
\end{definition}
In the above equilibrium definition, each $\mu_{t}^H = \Gamma_{\mathcal M, t}^H (\pi) \in \mathcal{P} (\mathcal{X})^{\min (T, t +H)}$ with $t \in \mathcal T$ is a mean field ensemble with initial state distribution $\mu_{t, t}^H = \mu_{t-1, t}^H$ at time $t$ and recursively calculated
\begin{align*}
    \mu_{t, t'+1}^H (x') = \sum_{x \in \mathcal X} \mu_{t, t'}^H(x) \sum_{u \in \mathcal U} \pi_{t'}(u \mid x) p_{t'} \left(x' \mid x, u, \mu_{t, t'}^H \right)
\end{align*}
for each $t' \in \{ t, \ldots, \min (T, t+H) - 1 \}$.

\paragraph{RH QRE.} To define a RH QRE with receding horizon $H \in \mathbb N$, we first provide an ensemble of $Q$-function sequences $(\hat{Q}_t^{H, \pi})_{t \in \mathcal T}$ with $\hat{Q}_t^{H, \pi} \coloneqq (\hat{Q}_{t, t'}^{H, \pi})_{t' \in \{ t, \ldots, \min (T, t + H) \} }$ for each $t \in \mathcal T$. Here, we first define $Q_t^{H, \pi} \coloneqq (Q_{t, t'}^{H, \pi})_{t' \in \{ t, \ldots, \min (T, t + H) \} }$ for each $t \in \mathcal T$ through the recursion
\begin{align*}
    Q_{t, t'}^{H, \pi} (x,u) = r(x,u,\mu_{t'}) + \sum_{x' \in \mathcal{X}} p_{t'} (x'\mid x,u,\mu_{t'}) \sum_{u' \in \mathcal U} \pi_{t'}(u' \mid x') Q_{t, t'+1}^{H, \pi}(x', u')
\end{align*}
with terminal condition $Q_{t, \min (T, t + H) }^{H, \pi} \coloneqq 0$. Then, we define for all $t \in \mathcal T$ and $t' \in \{ t, \ldots, \min (T, t + H) \}$
\begin{align*}
    \hat Q^{H, \pi}_{t, t'}(x,u) \coloneqq Q^{H, \pi}_{t, t'}(x,u) + \epsilon_{t, t'}(x,u),
\end{align*}
where the $\epsilon_{t, t'} (x,u)$ are sampled i.i.d. from some distribution $p_\epsilon$. Analogous to the standard case, we define the RH response set
\begin{align*}
    \mathcal{R}_{t, t',x,u}^{H, \mu} = \{\epsilon_{t, t'} \colon \hat Q^{H, \pi, \mu}_{t, t'}(x,u) > \hat Q^{H, \pi, \mu}_{t, t'}(x,u') \quad \forall u'\neq u \}.
\end{align*}

\begin{definition}[RH QRE]
    A (Markov) RN QRE is a policy ensemble $(\pi^{*, H}_t)_{t \in \mathcal T}$ s.t. for each $\pi^{*, H}_t = (\pi^{*, H}_{t, t'})_{t' \in \{ t, \ldots, \min (T, t + H) \} }$
    \begin{equation*}
        \pi_{t, t'}^{*, H}(u\mid x) = \int_{\mathcal{R}^{H, \mu}_{t,t',x,u}}p(\epsilon) \mathrm d\epsilon
    \end{equation*}
    for all $t \in \mathcal T, t' \in \{ t, \ldots, \min (T, t + H) \}, x \in \mathcal X, u \in \mathcal U$, and $\mu = \Gamma_{\mathcal M, t}^H \left(\pi^{*, H}_{t} \right)$. As before, the actually implemented equilbrium policy $\pi^{**,H} \in \Pi$ results from the policy ensemble $(\pi^{*, H}_{t})_{t \in \mathcal T} \in \Pi^{T}$ by taking
    \begin{align*}
        \pi^{**, H}_{t} = \pi^{*, H}_{t, t},
    \end{align*}
    for each $t \in \mathcal{T}$ and can be seen as the diagonal $\pi^{**,H} = \mathrm{diag} ((\pi^{*, H}_{t})_{t \in \mathcal T})$ of ensemble $(\pi^{*, H}_{t})_{t \in \mathcal T}$.
\end{definition}

\section{E \quad Proof of Theorem~\ref{thm:gfp2}} \label{app:thm:gfp2}
In this section, we further prove the convergence of FP in the regularized receding-horizon case.

\paragraph{Proof for sequential RH-GFP.}
In the sequential version, the proof follows directly from Thm.~\ref{thm:gfp1}. In particular, for any $t \in \mathcal T$ in ascending order, we compute $\pi^{*, H}_{t}$ as 
\begin{align*}
    \bar \mu^{\tau, H}_{t,t'} &\coloneqq \frac 1 \tau \int_{0}^{\tau} \mu^{\pi^{*, \tau'}}_{t,t'} \mathrm d\tau', \\
    \bar \pi^{\tau,H}_{t,t'} &\propto \frac 1 \tau \int_{0}^{\tau} \mu^{\pi^{*, \tau'}}_{t,t'} \pi^{*, \tau'}_{t,t'} \mathrm d\tau',
\end{align*}
which converges to an $\epsilon$-optimal $\pi^{*, H}_{t}$ for $\tau > \tau_t$ at some $\tau_t > 0$ by Thm.~\ref{thm:gfp1}. Here, $\mu^{\pi^{*, \tau'}}_{t}$ solves by the usual forward equations but starts from $\mu^{\pi^{*, \tau'}}_{t-1, t}$, or $\mu_0$ for $t = 0$. This starting condition is also the one achieved by the previous computed policies at times before $t$. Therefore, the exploitability of the RH-GFP solution tends to zero at all times, i.e. $\mathcal{E}^{\mathrm{avg}} \to 0$ as $\tau \to \infty$. Choosing some $\tau^* > \max_{t \in \mathcal T}{\tau_t}$ for any fixed $\epsilon > 0$, we have an $\epsilon$-RH-RE $(\bar \pi^{\tau^*,H}_{t})_{t \in \mathcal T}$.

\newpage
\section{F \quad Parallel RH-GFP}
In this section we provide the pseudocode for the parallel computation of the RH equilibria.

\begin{algorithm}
    \caption{Parallel RH-GFP.}
    \label{alg4}
    \begin{algorithmic}[1]
        \STATE Input: Temperature $\alpha > 0$, policy $\pi^0$, $\beta \in (0, 1)$, equilibrium type ET $\in \{\text{NE}, \QRE, \BE, \text{RE}\}$.     
        \STATE Define $\Gamma_{\Pi}$ and $\Gamma_{Q}$ according to ET (Eq.~\ref{eq:equilibria}).
        \STATE Initialize $\mu_t^0\leftarrow\Gamma_{\mathcal M}(\pi^0)$  for $t=0,1,\ldots,T-1$
        \STATE Initialize $\bar \pi^0_t = \pi^0 \mu^0_i$ for $t=0,1,\ldots,T-1$
        \FOR {$k=0,1,\ldots,K-1$} 
            \FOR{$t=0,1,\ldots,T-1$}
        \STATE Evaluate lines 6-10 of Alg. 2.
        \STATE \lIf{$i=0$} {$\mu_{t,0}^k = \mu_{0}$} \lElse {$\mu_{t,0}^k = \mu_{t-1,1}^{k-1}$}
        \ENDFOR
        \ENDFOR
        \STATE \textbf{return} $\pi_t^K$ for $t=0,1,\ldots,T-1$.
    \end{algorithmic}
\end{algorithm}

\section{G \quad Benchmarks} 
In this section we describe the MFGs used for evaluation.
\subsection{Modified RPS Game}
The modified RPS Game consists of the  states $\mathcal{X} = \{\mathrm{Start},\mathrm{Rock},\mathrm{Paper},\mathrm{Scissor}\}$ and the actions $\mathcal{U} = \{\mathrm{Rock},\mathrm{Paper},\mathrm{Scissor}\}$.
At the initial time point the mean field starts in the Start state, i.e., $\mu_0(x=\mathrm{Start})=1.0$. The agents choose in each time point, in which state they want to jump, while the probability that this jump succeeds, depends on the current mean field. If the jump does not succeed, the agent stays in its state. The transition probabilities of a agent in state $\mathrm{Paper}$ choosing action $\mathrm{Rock}$ yield:
\begin{align*}
    p_t(x_{t+1} = \mathrm{Rock} \mid x_t = \mathrm{Paper}, u_t = \mathrm{Rock}, \mu_t) &= 1 - \mu_t(x=\mathrm{Rock})\\
        p_t(x_{t+1} = \mathrm{Paper} \mid x_t = \mathrm{Paper}, u_t = \mathrm{Rock}, \mu_t) &= \mu_t(x=\mathrm{Rock}).
\end{align*}

At each time step, an agent receives a positive reward based on the proportion of the mean-field it dominates in a classic RPS game, and conversely, incurs a negative reward proportional to the portion of the mean-field it is defeated by.
We modify the reward function with some weights to ensure that the NE is not the uniform policy:

\begin{align*}
    r_t(x_t=\mathrm{Rock},\cdot) &= - a \mu_t({x=\mathrm{Paper}}) + b \mu_t(x=\mathrm{Scissor})\\
      r_t(x_t=\mathrm{Paper},\cdot) &= - c \mu_t({x=\mathrm{Scissor}}) + a \mu_t(x=\mathrm{Rock})\\
        r_t(x_t=\mathrm{Scissor},\cdot) &= - b \mu_t({x=\mathrm{Rock}}) + c \mu_t(x=\mathrm{Paper}).
\end{align*}
We choose $a=10.0, b=1.0$ and $c=10.0$.

\subsection{Random MFGs}
The random MFGs that were used for evaluation in the experiment section consist of $n_x = 100$ states and $n_a = 10$ actions and were evaluated for a time horizon $\mathcal T = 10$. The $\mathcal T \times n_x \times n_x \times n_a$ sized transition probabilities and the $\mathcal T \times n_x \times n_a$ reward table where created using a random value generator. We then add the mean field dependent function $\bar r(x,\mu) = -\eta \log (\mu(x))$ to the reward that promotes social distancing, with $\eta = 1.0$.

\subsection{SIS Game }
The SIS game consists of the states $\mathcal{X}=\{\mathrm S,\mathrm I\}$ for susceptible and infectious and the actions $\mathcal U =\{\mathrm N, \mathrm Q\}$ for no quarantine and quarantine.
The dynamics of the MFG are defined by the following transition probabilities,

\begin{align*}
    p_t(x_{t+1} = \mathrm S \mid x_t = \mathrm I, u_t = \mathrm N, \mu_t) &= \gamma\\
    p_t(x_{t+1} = \mathrm S \mid x_t = \mathrm I, u_t = \mathrm Q, \mu_t) &= \gamma\\
    p_t( x_{t+1} = \mathrm{I} \mid x_t = \mathrm{S}, u_t = \mathrm{N}, \mu_t) &= \kappa \mu(x=\mathrm{I})  \\
    p_t(x_{t+1} = \mathrm{I} \mid x_t = \mathrm{S}, u_t = \mathrm{Q}, \mu_t) &= 0.0,
\end{align*}
where $\gamma = 0.4$ is the healing rate and $\kappa=0.81$ is the infection rate.
The running rewards are given by
\begin{align*}
    r_t(x_t=\mathrm{I},u_t=\mathrm{N},\mu_t) & = - c_i \\
    r_t(x_t=\mathrm{I},u_t=\mathrm{Q},\mu_t) & = - c_i -c_q\\
    r_t(x_t=\mathrm{S},u_t=\mathrm{N},\mu_t) & = 0.0 \\
    r_t(x_t=\mathrm{S},u_t=\mathrm{Q},\mu_t) & = - c_i -c_q,
\end{align*}
where $c_i=1.0$ is the cost for being infected and $c_q=0.5$ is the cost for staying in quarantine. 
During our experiment the initial mean field was $\mu_0(x=\mathrm{I}) = 1-\mu_0(x=\mathrm{S}) = 0.1 $.

\newpage
\section{H \quad Additional Experiments}
In this section we describe additional experiments to further demonstrate the convergence of our proposed algorithms. We run all experiments on a Macbook Pro with M1 chip.
While in the main text we mainly analyze the GFP algorithm, here we provide additional experiments with the GFPI algorithm. 
We employ the GFPI algorithm to compute the $\QRE$, $\BE$ and RE for a random MFG, the SIS problem and the sequential RPS game with $\alpha=1.0$. The computation of the NE is carried out using the GFP algorithm, as the GFPI algorithm fails to converge in this case, as discussed in the section about learning non-cooperative MF equilibria. Figures~\ref{fig:random_GFPI},~\ref{fig:SIS_GFPI} and \ref{fig:RPS_GFPI} show the progress of the algorithms by displaying $\Delta J(\pi),\Delta \QRE(\pi)$, $\Delta \BE(\pi)$ and $\Delta \mathrm{RE}(\pi)$ over the iterations $k$ for the random MFG, the SIS game and the RPS game, respectively. The results display the efficacy of the GFPI algorithm, showing fast convergence to the equilibria for the random MFG and the SIS problem. For the sequential RPS Game however, the GFPI algorithm does not converge. This highlights the need for a more general algorithm like the GFP algorithm.

As a comparison, we run the GFP algorithm on the same tasks with $\beta=0.95$. Figures~\ref{fig:random_GFP} and \ref{fig:RPS_GFP} show the results for the random MFG and the RPS game, respectively, while the results for the SIS game where shown in Figure \ref{fig:SIS_GFP} in the main text. We can see the convergence for all tasks. Due to the averaging of the MFs, the GFP is slower than the GFPI (when it converges), but converges more robustly. 

\begin{figure}[htbp]
    \centering
    \begin{minipage}{0.45\textwidth}
        \centering
        \includegraphics[width=\textwidth]{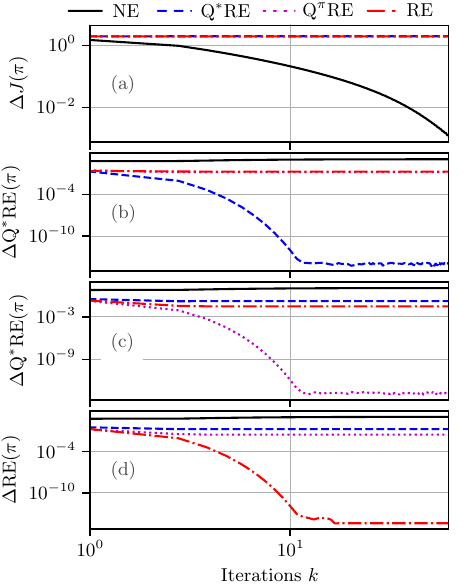}
        \caption{Convergence of GFPI for a random MFG with $\alpha = 1.0$.}
        \label{fig:random_GFPI}
    \end{minipage}
    \hfill
    \begin{minipage}{0.45\textwidth}
        \centering
        \includegraphics[width=\textwidth]{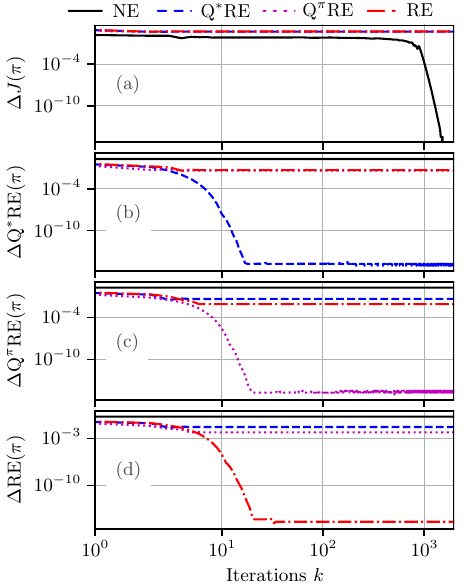}
        \caption{Convergence of GFPI for the SIS MFG with $\alpha = 1.0$.}
        \label{fig:SIS_GFPI}
    \end{minipage}
\end{figure}

\begin{figure}
    \centering
    \includegraphics{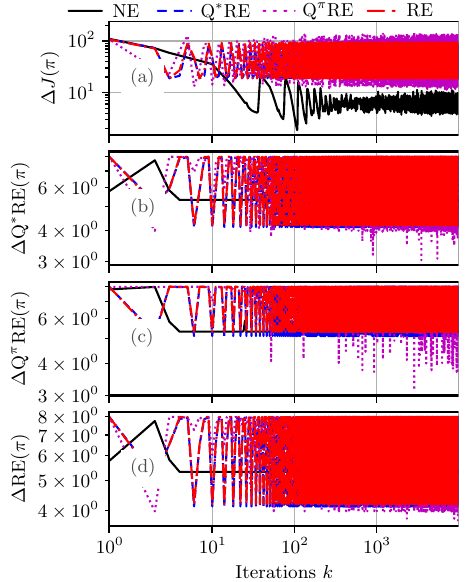}
    \caption{Nonvergence of GFPI for the RPS MFG with $\alpha = 1.0$.}
    \label{fig:RPS_GFPI}
\end{figure}

\begin{figure}[htbp]
    \centering
    \begin{minipage}{0.45\textwidth}
        \centering
        \includegraphics[width=\textwidth]{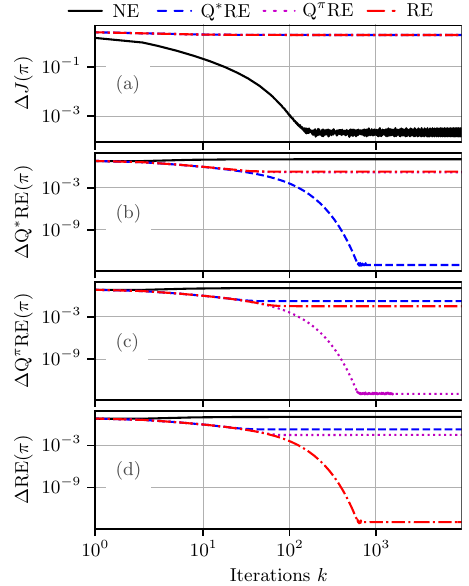}
        \caption{Convergence of GFP for a random MFG with $\alpha = 1.0$ and $\beta=0.95$.}
        \label{fig:random_GFP}
    \end{minipage}
    \hfill
    \begin{minipage}{0.45\textwidth}
        \centering
        \includegraphics[width=\textwidth]{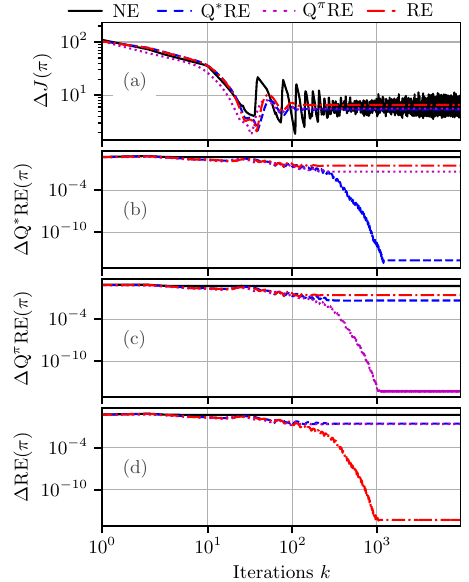}
        \caption{Convergence of GFP for the RPS MFG with $\alpha = 1.0$ and $\beta=0.95$.}
        \label{fig:RPS_GFP}
    \end{minipage}
\end{figure}

Next, we compare the parallel to the sequential RH GFP. Therefore we employ both algorithms on the RPS problem with $\mathcal{T}=10$ and $H=5$ to compute the RH $\QRE$. This leads to $\mathcal T - H +1 = 6$ underlying MFGs. Note that in our algorithms 3 and 4, we have $\mathcal T$ MFGs, but we can reduce it to $\mathcal T - H +1$ by applying all actions from the last equilibrium policy. Figure \ref{fig:parallel_vs_seq} shows the progress of both algorithms over the iterations $k$ for all $6$ MFGs. We see that the sequential algorithm takes approximately the same number of iterations $k=1000$ for all MFGs to converge. The parallel algorithm is equivalent to the sequential algorithm for the first MFG, resulting in the same convergence. For the following MFGs the number of iterations increase only little per game. As a result, the parallel algorithm is more efficient, needing significantly fewer iterations overall compared to the sequential algorithm, which requires 1000 iterations per MFG, while the parallel algorithm only needs a little more than 1000 iterations to compute all MFGs in parallel.

\begin{figure}
    \centering
    \includegraphics{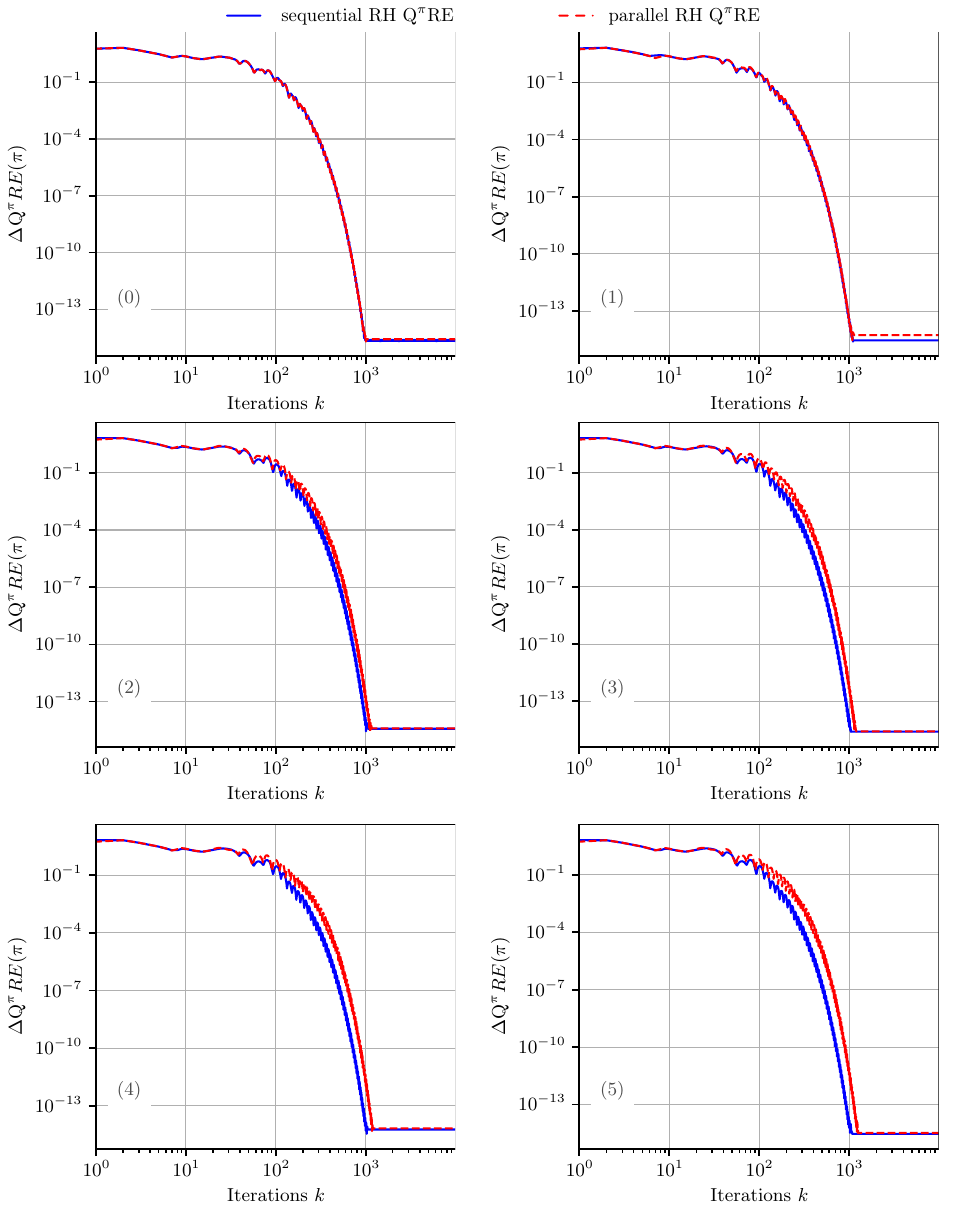}
    \caption{Comparion of the sequential and parallel RH GFP for the RPS MFG with $\alpha=1.0$, $\beta=0.95$, total horizon $\mathcal T = 10$ and receding horizon $H$. While the sequential algorithm requires approximately 1000 iterations per MFG, the parallel algorithm only needs a little more than 1000 iterations to compute all MFGs in parallel.}
    \label{fig:parallel_vs_seq}
\end{figure}

\end{document}